\newtheorem{thm}{Theorem}
\newtheorem{lem}{Lemma}
\newtheorem{assumption}{Assumption}
\newtheorem{remark}{Remark}
\title{Auxiliary Network-Enabled Attack Detection and Resilient Control of Islanded AC Microgrid}
\author{Vaibhav~Vaishnav,~\IEEEmembership{Student Member,~IEEE},~Anoop~Jain,~\IEEEmembership{Member,~IEEE}, and Dushyant~Sharma,~\IEEEmembership{Member,~IEEE}
\thanks{V. Vaishnav and A. Jain are with the Department of Electrical Engineering, Indian Institute of Technology Jodhpur 342030, India (e-mail:  {vaishnav.2@iitj.ac.in};  {anoopj@iitj.ac.in}). D. Sharma is with Department of Electrical Engineering, Indian Institute of Technology (Indian School of Mines) Dhanbad 826004, India (e-mail:  {dushyant@iitism.ac.in}).}}
\begin{document}

\maketitle

\begin{abstract}
This paper proposes a cyber-resilient distributed control strategy equipped with attack detection capabilities for islanded AC microgrids in the presence of bounded stealthy cyber attacks affecting both frequency and power information exchanged among neighboring distributed generators (DGs). The proposed control methodology relies on the construction of an auxiliary layer and the establishment of effective inter-layer cooperation between the actual DGs in the control layer and the virtual DGs in the auxiliary layer. This cooperation aims to achieve robust frequency restoration and proportional active power-sharing. It is shown that the in situ presence of a concealed auxiliary layer not only guarantees resilience against stealthy bounded attacks on both frequency and power-sharing but also facilitates a network-enabled attack identification mechanism. The paper provides rigorous proof of the stability of the closed-loop system and derives bounds for frequency and power deviations under attack conditions, offering insights into the impact of the attack signal, control and pinning gains, and network connectivity on the system's convergence properties. The performance of the proposed controllers is illustrated by simulating a networked islanded AC microgrid in a Simulink environment showcasing both attributes of attack resilience and attack detection.
\end{abstract}

\begin{IEEEkeywords}
AC microgrids, cyber-security, network-enabled attack detection, resilient control, stealthy attacks.
\end{IEEEkeywords}


\section{Introduction}\label{Introduction}
Distributed control has been the most widely adopted strategy over the past decade to exercise frequency and voltage control of islanded AC microgrids \cite{zhou2020distributed}. The cooperation among the DGs in a microgrid relies on local information exchange over a sparse communication network, which is vulnerable to cyber attacks \cite{peng2019}. Such attacks, commonly known as false data injection (FDI) attacks, jeopardize the integrity of information transmitted across communication channels and have garnered significant attention in resilient microgrid control \cite{reda2022}. A trust factor-based control regime was proposed in \cite{lu2020distributed} to synchronize the frequencies of DGs under attack, by real-time monitoring of a confidence factor associated with all the neighboring DGs. Wang et al.\cite{wang2020cyber} proposed cyber-resilient adaptive controllers for frequency and voltage restoration with active/reactive power-sharing in networked AC/DC microgrids. These controllers utilize time-varying adaptive gains based on frequency, voltage, and power errors. An auxiliary state was introduced in \cite{sadabadi2021} to make conventional distributed frequency consensus protocol cyber resilient and maintain frequency synchronization and proportional active power-sharing. Recently, \cite{xiao2023} proposed attack magnitude-dependent time-variant communication weights, to implement an adaptive frequency regulation strategy using the most optimal communication channels in the presence of multiple link attacks.

Apart from incorporating cyber resilience in the control framework, topology switching induced by different types of attack detection techniques has been the other counterpart of attack mitigation strategies in microgrid control. These techniques can be further classified into (a) system-based \cite{intriago2023,madichetty2021,liu2022} and (b) data-based approaches \cite{wan2022,takiddin2022}, where the states of system under consideration are continuously estimated using an observer in (a), while historical training data is used for attack detection for redundancy in (b). To the best of the authors' knowledge, the concepts of \emph{attack resilience} and \emph{attack detection} have often been addressed separately in the microgrid control literature. Therefore, designing a suitable distributed control framework together with resilience and detection capabilities is a crucial research direction for ensuring the safety and security of microgrid control.

In recent times, cyber adversaries have evolved to employ increasingly sophisticated and deceptive stealthy attacks \cite{zhang2021stealthy} that might be challenging to ascertain using novel detection strategies like \cite{intriago2023,madichetty2021,liu2022,wan2022,takiddin2022}. Equipped with the complete knowledge of dynamics and control architecture of the system, the attacks are now specifically designed as bounded intermittent disturbances with state-dependent linear/nonlinear dynamics \cite{zhang2021stealthy,lu2023concurrent} which can easily deceive state observers and showcase zero neighborhood error, even when the neighboring DGs are out of synchronization \cite{mustafa2019}. In view of such bounded stealthy attacks, a virtual hidden layer-based cyber resilient framework was proposed in \cite{gusrialdi2018} for distributed control of networked systems. Motivated by \cite{gusrialdi2018}, in this paper, we propose auxiliary virtual network-enabled distributed resilient control for frequency synchronization and proportional active power-sharing among a group of inverter-based DGs. 

Distributed controllers adopting the aforementioned idea of the virtual layer have been previously developed in \cite{chen2020,zhou2021,liu2021robust,jamali2023}, addressing the problem of frequency synchronization in microgrids. However, these works do not emphasize the need for a dedicated power controller that can provide resilience against any possible attack on power-sharing between neighboring DGs. The distributed secondary controllers in \cite{chen2020,zhou2021,liu2021robust,jamali2023} are devised to provide a frequency set point to the droop-based primary control which depends on the inherent linear relationship between frequency and active power \cite{guerrero2010}. However, any FDI attack affecting the sharing of active power can significantly hamper system frequency and lead to deviations in power demand beyond the rated values of the respective DGs. Therefore, a resilient power controller is obligatory for preserving the resilience of the frequency controller against cyber attacks. Motivated by this fact, in this paper, we design resilient frequency and power controllers capable of functioning in the presence of simultaneous attacks on both frequency and active power within a microgrid.

It is also evident to mention that the stability under proposed controllers in \cite{gusrialdi2018,chen2020,liu2021robust,jamali2023} is leveraged by assumptions on the magnitude of resilient control gain and inclusion of attack vector in the Lyapunov candidate function. However, considering that a cyber attack is an external breach that is completely unknown to both the system and the control engineer, establishing the closed-loop stability of a cyber-resilient control system relying solely on the state dynamics of the system appears to be a more practical and mathematically viable approach. Further, the control architecture inclusive of an auxiliary layer hidden from the attacker provides an alternate safeguarded path for communication among neighboring DGs, thereby giving a provision to compare and detect any discrepancy in information being shared through the actual cyber layer. Leveraging this, we propose a straightforward test for detecting an attack, as compared to the works \cite{chen2020,zhou2021,liu2021robust,jamali2023}. In the light of above discussion, the major contributions of this paper can be summarized as follows:

\begin{enumerate}
	\item[i)] Relying on an auxiliary layer, we propose frequency and power-distributed controllers as leader-follower and leaderless resilient cooperative systems, respectively, for frequency regulation and proportional active power-sharing in an islanded AC microgrid, under the presence of FDI attacks affecting both frequency and power-sharing between the neighboring DGs. The FDI attacks are uniformly bounded and governed by the dynamics having finite $\mathcal{L}_2$ gain.  
\item[ii)] Due to the difference in the nature of cooperation and distributed consensus protocols, we provide separate stability analyses under frequency and power attacks and obtain analytical bounds characterizing the deviations of frequency and power from their steady-state value under no-attack conditions. It is shown that these bounds depend on the underlying network topology, the magnitude of the attack signals, pinning gain and a gain term decided by the control designer.   
\item[iii)] As a byproduct of auxiliary layer-based control design, we propose a network-enabled attack detection mechanism showcasing secured propagation of control variables between the cyber and auxiliary layer, thereby devising a detection test for surveillance of any possible attack in the cyber layer.    
\end{enumerate}

Additionally, an extensive simulation study is presented under the simultaneous presence of both cyber attacks and load perturbations. The remnant of this paper is arranged as follows: Section \ref{Introduction} commences with an introduction, summarizing mathematical preliminaries and a brief review of secondary droop control. Section \ref{system_description} describes system architecture and attack model and formulates the problem. Section \ref{Frequency_control} proposes a resilient frequency controller and obtains various bounds characterizing the effect of attack and network connectivity on the steady-sate frequency error. The philosophy behind resilient power controllers is discussed under Section \ref{Power_control}. Section \ref{Attack_detection} explains the auxiliary network-enabled attack identification scheme and proposes a generalized test for detecting the attacks.  Lastly, Section \ref{Simulation_case_studies} presents simulation case studies, before concluding the paper in Section \ref{Conclusion}.

\subsection{Notations, Graph Theory, and Mathematical Preliminaries}\label{Preliminaries}
Throughout the paper, $\mathbb{R}$ denotes the set of real numbers, and $\textbf{{0}}_{n} = [0,\ldots,0]^T \in \mathbb{R}^n, \textbf{{1}}_{n} = [1,\ldots,1]^T \in \mathbb{R}^n$. For $x = [x_1, \ldots, x_n]^T \in \mathbb{R}^n$, $\|x\|$ represents its Euclidean norm. The complex number is represented by $j_c = \sqrt{-1}$. The symbol $\otimes$ denotes the Kronecker product of two matrices. $\lambda(M)$ represents an eigenvalue of any square matrix $M \in \mathbb{R}^{n \times n}$, $M^T$ is its transpose and $\|M\|$ denotes its operator (or induced) 2-norm such that $\|M\| = \sqrt{\lambda_{\max}(M^TM)}$, where $\lambda_{\max}$ is the maximum eigenvalue of the symmetric matrix $M^TM$. Consider a network of $n$ DGs, described as an undirected graph $\mathcal{G}=(\mathcal{V}, \mathcal{E})$, with node set $\mathcal{V} = \{1, \ldots, n\}$ and edge set $\mathcal{E} \subset \mathcal{V} \times \mathcal{V}$. The information flow between $i^\text{th}$ and $j^\text{th}$ nodes can be represented by an undirected edge $(i,j) \in \mathcal{E} \Leftrightarrow (j,i) \in \mathcal{E}$. The set of neighbors of the $i^\text{th}$ node is denoted by $\mathcal{N}_i$ = $\{j:(i,j) \in \mathcal{E}, j \neq i\}$. The adjacency matrix of $\mathcal{G}$ is denoted by $\mathcal{A}$ = $[a_{ij}] \in \mathbb{R}^{n\times n}$ with $a_{ij} = 1 \Leftrightarrow (i,j) \in \mathcal{E}$, else $a_{ij} = 0$. The Laplacian matrix of $\mathcal{G}$ is denoted by  $L = [\ell_{ij}] \in \mathbb{R}^{n\times n}$, where $\ell_{ii} = \mathcal{\sum}_{j \in \mathcal{N}_i} a_{ij}$ and $\ell_{ij} = - a_{ij}$ for $i \neq j$. For an undirected and connected graph $\mathcal{G}$, the eigenvalues of the Laplacian $L \in \mathbb{R}^{n \times n}$ can be arranged in the ascending order as $0 = \lambda_1(L) \leq \lambda_2(L) \leq \cdots \leq \lambda_n(L)$. The following lemmas will be useful in the subsequent analysis of this paper: 

\begin{lem}[\hspace{-1pt}\cite{wang2010}, \cite{vaishnav2023}]\label{lem_1}
Let $\mathcal{G}$ be an undirected and connected graph with Laplacian $L \in \mathbb{R}^{n \times n}$, and $B = {\rm diag}\{b_1, \ldots, b_n\} \in \mathbb{R}^{n \times n}$ be a diagonal matrix with diagonal entries $b_i > 0, \forall i$. Then, the matrix $L + B$ is positive-definite.
\end{lem}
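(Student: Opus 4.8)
The plan is to establish positive-definiteness directly from the associated quadratic form, exploiting the fact that $L+B$ is symmetric. First I would note that $L$ is symmetric (since $\mathcal{G}$ is undirected) and $B$ is symmetric (being diagonal), so $L+B$ is a real symmetric matrix; for such matrices, positive-definiteness is equivalent to strict positivity of the quadratic form $x^T(L+B)x$ on every nonzero vector, so it suffices to verify the latter.

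Next, for any $x = [x_1,\ldots,x_n]^T \in \mathbb{R}^n$ I would split the form as
\[
x^T(L+B)x = x^T L x + x^T B x.
\]
The standard Laplacian identity gives $x^T L x = \tfrac{1}{2}\sum_{(i,j)\in\mathcal{E}} a_{ij}(x_i - x_j)^2 \geq 0$, so $L$ is positive-semidefinite. For the second term, the strict positivity of every diagonal entry yields $x^T B x = \sum_{i=1}^n b_i x_i^2 \geq (\min_i b_i)\,\|x\|^2 > 0$ whenever $x \neq \mathbf{0}_n$. Adding the two estimates gives $x^T(L+B)x \geq (\min_i b_i)\,\|x\|^2 > 0$ for all $x \neq \mathbf{0}_n$, which is precisely the claim. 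An equivalent spectral route would invoke Weyl's inequality for sums of symmetric matrices, $\lambda_{\min}(L+B) \geq \lambda_{\min}(L) + \lambda_{\min}(B) = \lambda_1(L) + \min_i b_i = \min_i b_i > 0$, using $\lambda_1(L)=0$; either argument closes the proof.

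Candidly, there is no serious obstacle here under the stated hypotheses: because every $b_i$ is strictly positive, the term $x^T B x$ already dominates the merely semidefinite contribution $x^T L x$, and connectivity of $\mathcal{G}$ is in fact not needed for this particular statement. Connectivity would become essential only in the weaker, pinning-type variant where $B$ has some zero diagonal entries—there one must additionally use $\ker L = \mathrm{span}\{\mathbf{1}_n\}$ (which requires connectivity) to rule out a nonzero $x$ simultaneously annihilated by both $L$ and $B$. I would therefore keep the proof to the short quadratic-form estimate above, mentioning the spectral bound only as an alternative.
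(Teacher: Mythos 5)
Your proof is correct and complete. Note that the paper itself offers no proof of this lemma---it is imported verbatim from the cited references---so there is no in-paper argument to compare against; your quadratic-form estimate $x^T(L+B)x \geq (\min_i b_i)\|x\|^2 > 0$ for $x \neq \pmb{0}_n$, together with the symmetry of $L+B$, is a standard and fully rigorous way to establish it, and the Weyl-inequality alternative you mention is equally valid. Your closing observation deserves emphasis, because it is more than an aside: as stated, with all $b_i > 0$, the lemma indeed does not need connectivity. However, where the paper actually invokes the lemma (in proving that the block matrix $\mathcal{K}$ is Hurwitz), the diagonal matrix is the pinning matrix $G$, which is only guaranteed to have \emph{at least one} nonzero diagonal entry, not all entries positive. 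That application requires exactly the pinning-type variant you describe: there one must use connectivity, i.e.\ $\ker L = \mathrm{span}\{\pmb{1}_n\}$, to rule out a nonzero $x$ with $x^T(L+G)x = 0$ (such an $x$ would force $x^T L x = 0$, hence $x = c\pmb{1}_n$, and then $x^T G x = c^2 \sum_i g_i > 0$ unless $c = 0$). So the hypothesis ``all $b_i > 0$'' in the lemma as quoted is stronger than what the paper's later usage can supply, and the connectivity assumption---superfluous for the statement as written---is precisely what rescues that usage. Your proof, extended by the two-line kernel argument you sketch, covers both cases.
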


\begin{lem}[\hspace{-1pt}\cite{dong2020resilient}]\label{lem_2}
Consider a uniformly bounded signal $\|p(t)\| \leq \Omega, \ \forall t \geq 0$, where $\Omega$ is a positive constant. Then, for a Hurwitz matrix $Q$, there exists a constant vector $\Psi$ and some finite-time $T$ such that following holds true for all $t \geq T$:
\begin{equation*}
\left\|\int_{0}^{t} {\rm e}^{Q(t- \tau)}p(\tau) d\tau \right\| \leq \left\| \int_{0}^{t} {\rm e}^{Q(t- \tau)} \Psi d\tau \right\|.
\end{equation*}
\end{lem}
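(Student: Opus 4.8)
The plan is to treat this as a comparison between a uniformly bounded quantity on the left and a freely scalable one on the right. First I would upper-bound the left-hand side by a constant that is independent of $t$, using only the boundedness of $p$ together with the Hurwitz property of $Q$. Then I would show that the right-hand side, being driven by the \emph{constant} vector $\Psi$, converges as $t \to \infty$ to a limit whose magnitude can be made as large as desired by an appropriate choice of $\Psi$. Combining the two, the right-hand side eventually dominates the left-hand side for all sufficiently large $t$, which furnishes the asserted finite-time $T$.

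For the left-hand side, since $Q$ is Hurwitz there exist positive constants $\kappa$ and $\alpha$ such that $\|{\rm e}^{Qt}\| \leq \kappa\, {\rm e}^{-\alpha t}$ for all $t \geq 0$. Using the triangle inequality for integrals, submultiplicativity of the operator norm, and the bound $\|p(\tau)\| \leq \Omega$, I would estimate
\begin{equation*}
\left\|\int_{0}^{t} {\rm e}^{Q(t-\tau)} p(\tau)\, d\tau\right\| \leq \int_{0}^{t} \kappa\, {\rm e}^{-\alpha(t-\tau)} \Omega\, d\tau = \frac{\kappa \Omega}{\alpha}\left(1 - {\rm e}^{-\alpha t}\right) \leq \frac{\kappa \Omega}{\alpha},
\end{equation*}
so the left-hand side is bounded above by the single constant $\kappa\Omega/\alpha$ uniformly in $t$.

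For the right-hand side, I would pull the constant $\Psi$ out of the integral and evaluate $\int_{0}^{t} {\rm e}^{Q(t-\tau)}\, d\tau = Q^{-1}({\rm e}^{Qt} - I)$, where $Q^{-1}$ exists because a Hurwitz matrix has no zero eigenvalue. Since ${\rm e}^{Qt} \to 0$ as $t \to \infty$, this integral tends to $-Q^{-1}$, whence the right-hand side converges to $\|Q^{-1}\Psi\|$. Because $Q^{-1}$ is invertible, $\|Q^{-1}\Psi\|$ can be made strictly greater than $\kappa\Omega/\alpha$ by choosing $\Psi$ with sufficiently large norm; the explicit choice $\Psi = Q v$ with $\|v\| > \kappa\Omega/\alpha$ gives the limit $\|v\|$ directly. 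Fixing such a $\Psi$ and invoking the definition of the limit yields a finite $T$ beyond which the right-hand side stays within a small neighborhood of $\|Q^{-1}\Psi\|$ and hence exceeds $\kappa\Omega/\alpha$; together with the uniform bound on the left-hand side, the claimed inequality then holds for all $t \geq T$.

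The argument is essentially routine; the only point requiring care is that the asserted $T$ must work for \emph{all} $t \geq T$ rather than holding merely asymptotically. This is secured precisely because the left-hand side is dominated by a single constant for every $t$, while by convergence the right-hand side remains within any prescribed tolerance of its strictly larger limit $\|Q^{-1}\Psi\|$ once $t$ is large enough. Thus the main (mild) obstacle reduces to correctly separating the uniform constant bound on the left from the limiting value on the right through the sizing of $\Psi$, after which the existence of $T$ is immediate.
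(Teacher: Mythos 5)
Your proposal is correct as a proof of the lemma \emph{exactly as stated}, but note that there is no in-paper argument to compare it against: the paper never proves Lemma~\ref{lem_2}, importing it by citation from \cite{dong2020resilient}. Both of your estimates are sound. For the left-hand side, the Hurwitz bound $\|{\rm e}^{Qt}\| \leq \kappa\,{\rm e}^{-\alpha t}$ yields the uniform-in-$t$ bound $\kappa\Omega/\alpha$. For the right-hand side, $\int_{0}^{t}{\rm e}^{Q(t-\tau)}\,d\tau\,\Psi = Q^{-1}({\rm e}^{Qt}-I)\Psi$, whose norm converges to $\|Q^{-1}\Psi\|$ as $t \to \infty$; with your choice $\Psi = Qv$, $\|v\| = \kappa\Omega/\alpha + 2\varepsilon$, convergence furnishes a finite $T$ such that the right-hand side exceeds $\kappa\Omega/\alpha + \varepsilon$ for every $t \geq T$, so the inequality indeed holds for all $t \geq T$ and not merely along a subsequence. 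Your attention to the uniformity of $T$ is exactly the right point of care.

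That said, your argument also exposes---and is limited by---a weakness of the statement as written: since $\Psi$ is unconstrained, the inequality can always be forced simply by inflating $\|\Psi\|$, so the lemma read literally is nearly vacuous. This matters for how the paper uses it. In the proofs of Theorems~\ref{thm_1} and~\ref{thm_2}, Lemma~\ref{lem_2} is invoked with the additional property that the constant vector obeys the \emph{same} bound as the signal (there, $\|\hat{d}_{\omega}\| \leq \bar{D}_{\omega}$ and $\|\hat{d}_{P}\| \leq \bar{D}_{P}$), and it is precisely this norm constraint that makes the resulting bounds $\epsilon_{\omega}$ and $\epsilon_{P}$ meaningful functions of the attack magnitude. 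Your construction gives $\|\Psi\| = \|Qv\|$ with $\|v\| > \kappa\Omega/\alpha$, which is in general far larger than $\Omega$, so it proves the letter of the lemma but not the strengthened version the paper's theorems actually rely on; establishing that version (if it holds) would require a genuinely different argument---presumably the one in \cite{dong2020resilient}---rather than a rescaling of $v$.
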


\subsection{Distributed Secondary Droop Control}\label{droopcontrol}
Droop control exploits the linear dependence of frequency on active power, causing the frequency of an inverter-based DG to droop with its output active power \cite{guerrero2005}, according to the relation
\begin{equation}\label{droop}
	\omega_i = \omega_{o_i} - m_{P_i}P_i,
\end{equation}
where $\omega_i$ is angular frequency, $\omega_{o_i}$ is the nominal angular frequency, $P_i$ is measured output active power and $m_{P_i}$ is the droop coefficient associated with the $i^\text{th}$ DG. Due to primary droop control, since the inverter's frequency undergoes a deviation from its nominal value, a suitable secondary control needs to be designed to restore the frequency of all the inverters to nominal microgrid frequency \cite{guerrero2010}. This is realized by cooperation among all the DGs at the secondary control level and designing distributed controllers $u_{\omega_i} = \dot{\omega}_i$ and $u_{P_i} = m_{P_i}\dot{P}_i$ for frequency and active power for each $i$. Using these, the nominal frequency is obtained from \eqref{droop} as:   
\begin{equation}\label{nominal_frequency}
	\dot{\omega}_{o_i} = \dot{\omega}_i + m_{P_i}\dot{P}_i \implies \omega_{o_i} = \int u_{\omega_i}~dt+ \int u_{P_i}~dt, \ \forall i.
\end{equation}

In the subsequent analysis, we use the vectors $\omega = [\omega_1, \ldots, \omega_n]^T \in \mathbb{R}^n$, $P = [P_1, \ldots, P_n]^T \in \mathbb{R}^n$ and $m_P P = [m_{P_1} P_1, \ldots, m_{P_n}P_n]^T \in \mathbb{R}^n$ to represent the frequency, power, and droop-coefficient associated active power for the network of $n$ DGs. 

\begin{figure}[t]
	\centering
	\includegraphics[scale=0.13]{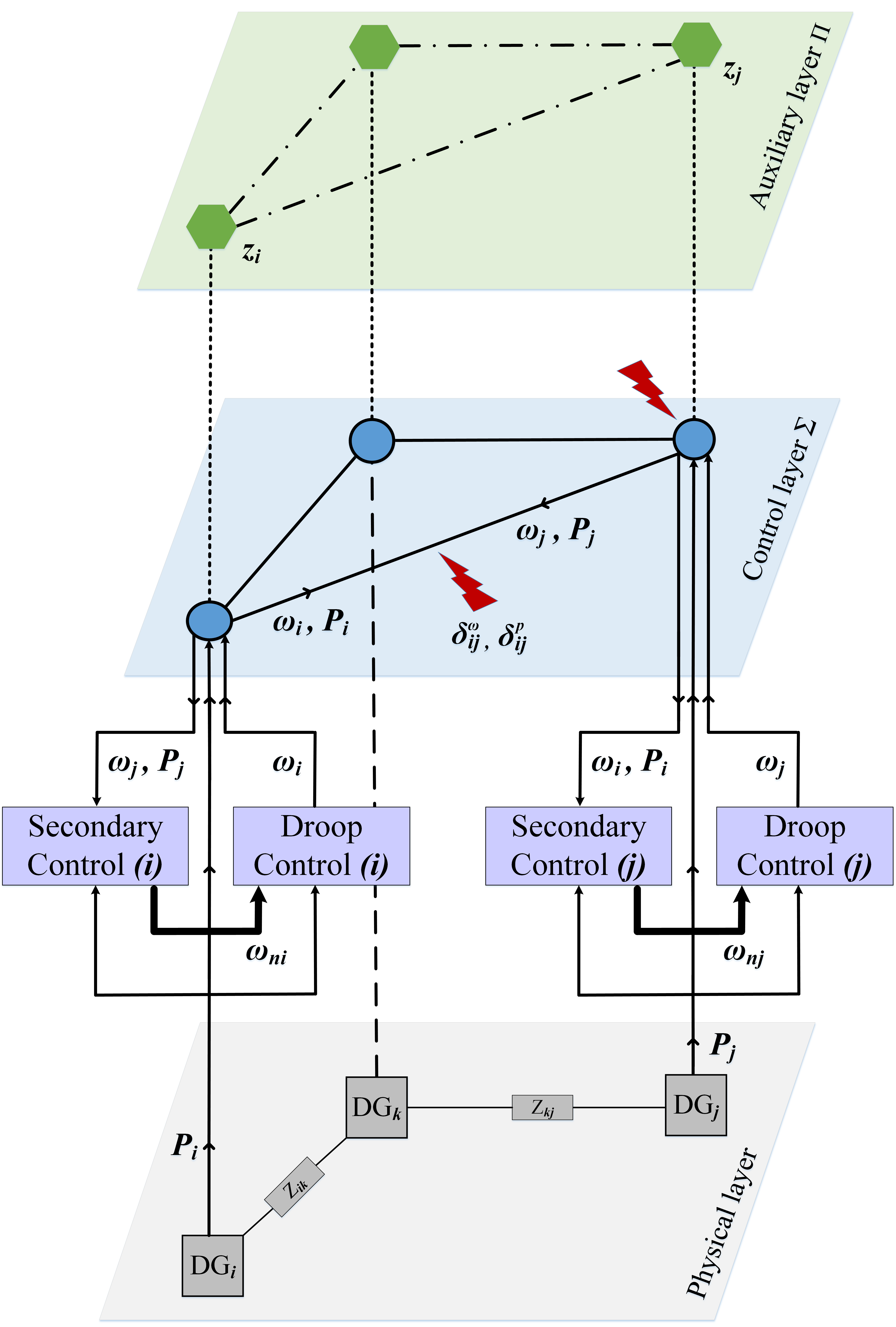}
	\caption{Microgrid network and control architecture.}
	\label{control_digram}
	\vspace*{-13pt}
\end{figure}

\section{System Description and Problem Statement}\label{system_description}

\subsection{System Description}
Consider a microgrid network comprising $n$ DGs in the physical layer, along with the associated control layer $\Sigma$ and the auxiliary layer $\Pi$, as shown in Fig.~\ref{control_digram}. Each DG is equipped with individual primary and secondary controllers where frequency $\omega_i$ and active power $P_i$ are the physical state variables to be controlled for each $i$. According to \eqref{nominal_frequency}, the secondary control generates nominal frequency to the primary droop controller for computing the actual frequency $\omega_i$ of the $i^\text{th}$ DG. Consequently, both frequency and the average active power signals of the DGs are shared among them according to the communication network at the control layer $\Sigma$ for designing frequency and power controllers. 

However, owing to the attacker's knowledge about the control layer, the communication network over $\Sigma$ is vulnerable to FDI attacks on both frequency and power signals, deviating the system from the desired cooperative control objectives. To account for these attacks, our approach relies on the construction of an auxiliary layer $\Pi$ and its integration with $\Sigma$ to assure resiliency toward frequency regulation and proportional active power-sharing. The auxiliary nodes in $\Pi$ are considered to have their own state dynamics but the same inter-agent interaction network as in $\Sigma$. Further, there exists a hidden network connecting the corresponding nodes in $\Sigma$ and $\Pi$ (see Fig.~\ref{control_digram}). 

Under an FDI attack, the frequency and power signals received at a particular node in $\Sigma$ may be different from the one transmitted by the neighboring nodes and can be expressed as:  
\begin{equation}\label{after_attack}
	\check{\omega}_{ij} = {\omega}_j + \delta_{ij}^{\omega}, \quad \check{P}_{ij} = {P}_j + \delta_{ij}^{P}, \quad j \in \mathcal{N}_i \cup \{i\}, 
\end{equation}
where $\check{\omega}_{ij}$ is the frequency signal received at node $i$ from node $j$, ${\omega}_j$ is the actual frequency of the $j^{\text{th}}$ DG, and $\delta_{ij}^{\omega}$ is the FDI attack affecting the link connecting nodes $i$ and $j$. Similar notations can be defined for $\check{P}_{ij}$. The total external injection at the $i^{\text{th}}$ node can be written as $\sum_{j \in \mathcal{N}_i \cup \{i\}} \delta_{ij}^{\omega} = d_{\omega_i}$ (resp., for power as $\sum_{j \in \mathcal{N}_i \cup \{i\}} \delta_{ij}^{P} = d_{P_i}$), leading to the attack vector $d_{\omega} = [d_{\omega_1},\ldots,d_{\omega_n}]^T$ (resp., $d_{P} = [d_{P_1}, \ldots, d_{P_n}]^T$) for the $n$ nodes in $\Sigma$. In general, an attacker might employ an FDI attack varying as per the real-time states of the system, which dies out as the system states become stable. Such kinds of attacks deviate the system from achieving the desired convergence properties and even make it unstable while being stealthy \cite{gusrialdi2018}. In this paper, we model the frequency attack (resp., power attack) as a function of states $\omega_i$ (resp., $P_i$) and time $t$ such that $d_{\omega}(\omega,t)$ (resp., $d_{P}(P,t)$) satisfies the following assumption:
\begin{assumption}\label{assumption}
The injections $d_{\omega}(\omega, t)$ and $d_{P}(P, t)$ are uniformly bounded for bounded $\omega$ and $P$, that is, there exist positive constants $\bar{D}_{\omega}, \bar{D}_{P}$ such that $\|d_{\omega}(\omega, t)\| \leq \bar{D}_{\omega}$ and $\|d_{P}(P, t)\| \leq \bar{D}_{P}$ for all $\omega \in \mathbb{R}^n, P \in \mathbb{R}^n$ and $t \geq 0$. In particular, if these attacks are generated by the dynamics,
\begin{equation}\label{attack_dynamics} 
\dot{d}_{\omega} = f_{\omega}(d_{\omega},\omega); \quad \dot{d}_P = f_P(d_P, P), 	
\end{equation}
these must have a finite $\mathcal{L}_2$-gain. 
\end{assumption}	

The relevance of this assumption lies in the fact that an intelligent attacker always aims at inserting a bounded attack signal, as the injections with large magnitude are easy to detect and will be rejected before they spread across the network. Further, dynamics \eqref{attack_dynamics} having finite $\mathcal{L}_2$-gain implies that $d_{\omega}$ and $d_P$ settle down to some value, as $\omega$ and $P$ reach the steady-state, respectively.

\subsection{Problem Statement}\label{Problem_formulation}
Consider $n$ DGs, connected via an undirected and connected graph $\mathcal{G}$, as shown in Fig.~\ref{control_digram}. Assume that the DGs be governed by the droop-based frequency dynamics \eqref{nominal_frequency} and subject to frequency and power attacks $d_{\omega}$ and $d_P$ in control layer $\Sigma$, according to Assumption~\ref{assumption}. Let $z = [z_1, \ldots, z_n]^T \in \mathbb{R}^n$ be the state vector associated with the auxiliary nodes and $\omega ^{*} > 0$ be the desired value of the microgrid frequency. Based on the integration between $\Sigma$ and $\Pi$ layers, design the vector functions $\mathcal{F}^{c}_{\omega}(\omega, z), \mathcal{F}^{a}_{\omega}(\omega, z), \mathcal{F}^{c}_{P}(P, z), \mathcal{F}^{a}_{P}(P, z)$ such that the following auxiliary-state coupled dynamics, under the frequency and power attacks $d_{\omega}$ and $d_P$, disseminated over $\Sigma$ using Laplacian $L$, respectively,
\begin{equation}\label{coupled_model}
\hspace*{-0.3cm} S_{\omega}:
	\begin{dcases*}
		\dot{\omega} & = $\mathcal{F}^{c}_{\omega}(\omega, z) + L d_{\omega}$\\
		\dot{z} &= $\mathcal{F}^{a}_{\omega}(\omega, z)$
	\end{dcases*},
	\
	S_P:
	\begin{dcases*}
	m_P\dot{P} &= $\mathcal{F}^{c}_{P}(P, z) + L d_{P}$\\
	\dot{z} &= $\mathcal{F}^{a}_{P}(P, z)$
	\end{dcases*},
\end{equation}
assures that the following hold true for some small constants $\epsilon_{\omega} > 0$ and $\epsilon_{P} > 0$:
\begin{itemize}
\item[{\bf (P1)}] Frequency is maintained at the desired nominal value for all the DGs, that is,  $\lim_{t \to \infty} \|\omega(t) - \omega^{*}\pmb{1}_n\| \leq \epsilon_{\omega}$.
\item[{\bf (P2)}] Power is shared among the DGs in proportion to their droop coefficients, that is, $\lim_{t \to \infty} \|m_{P} P - \Delta_P \pmb{1}_n\| \leq \epsilon_{P}$, where $\Delta_P = (1/n)\sum_{i=1}^{n} m_{P_i}P_i(0)$ is a constant. 
\end{itemize}
Additionally,
\begin{itemize}
\item[{\bf (P3)}] Based on the integration of $\Sigma$ and $\Pi$, propose an attack detection mechanism such that non-zero FDI attacks $\delta_{ij}^{\omega}$ and $\delta_{ij}^{P}$, as defined in \eqref{after_attack}, are detected for all $(i, j) \in \mathcal{E}$. 
\end{itemize}

\section{Distributed Attack Resilient Frequency control}\label{Frequency_control}
For resiliency towards frequency attacks $d_{\omega}$, we propose the system $S_{\omega}$ in \eqref{coupled_model} as follows:
\begin{subequations}\label{u_w}
	\begin{align}
		\dot{\omega}  &= A\omega + \beta A z + B\omega^{*} + Ld_{\omega} \label{u_w_1}\\
		\dot{z} &= A z - \beta A \omega + \beta C \omega^{*}, \label{u_w_2}
	\end{align}
\end{subequations}
where $\omega^{*}$ is the reference frequency fed to the leader DGs in the group. The matrix $A$ is given by
\begin{align}\label{A}
	A = -\begin{bmatrix}
		\sum_{j =1}^{n} a_{1j} & -a_{12} & \ldots & -a_{1n}\\
		-a_{21} & \sum_{j =1}^{n} a_{2j} & \ldots & -a_{2n} \\
		\vdots & \vdots & \vdots & \vdots \\
		-a_{n1} & -a_{n2} & \ldots & \sum_{j =1}^{n} a_{nj}
	\end{bmatrix} \nonumber \\
	- \begin{bmatrix}
		g_1 & 0 & \ldots & 0\\
		0 & g_2 & \ldots & 0 \\
		\vdots & \vdots & \vdots & \vdots \\
		0 & 0 & \ldots & g_n
	\end{bmatrix}
	= -(L+G),
\end{align}
where $G$ is a diagonal matrix with its entries $g_i$ represents the pinning gain, such that $g_i = 1$ if the $i^\text{th}$ DG is a leader and $0$ otherwise. Further, $\beta > 0$ is a gain factor and the vectors $B$ and $C$ are defined as $B \coloneqq G\pmb{1}_n$ and $C \coloneqq A\pmb{1}_n$. In \eqref{u_w}, matrices $A$ and $B$ are such that the frequency controller $\dot{\omega}_i$ of the $i^\text{th}$ DG contains relative frequency term $\sum_{j \in \mathcal{N}_i}(\omega_j - \omega_i)$ (because of the Laplacian $L$) responsible for frequency consensus, and the term $g_i(\omega_i - \omega^\star)$ associated with the pinning gain $g_i$, shapes the solution trajectories such that consensus occurs at the desired frequency $\omega^\star$. The interconnection between the two layers $\Sigma$ and $\Pi$ is realized by the matrix $\beta A$, associated with (i) the auxiliary state vector $z$ in the $\dot{\omega}$ in dynamics \eqref{u_w_1}, and (ii) the actual frequency vector $\omega$ in the $\dot{z}$ dynamics \eqref{u_w_2}. Note that the auxiliary state vector $z$ has no physical significance and may assume any steady-sate value. We have the following lemma relating matrices $A, B, C$ and $G$: 

\begin{lem}\label{lem_matrices}
Consider system \eqref{u_w} with matrix $A$ in \eqref{A}. Then, $A\pmb{1}_n = -B$ and $C = -G$ for an undirected and connected graph $\mathcal{G}$.
\end{lem}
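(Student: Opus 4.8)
The plan is to reduce both identities to the single defining property of a graph Laplacian, namely that its row sums vanish: $L\pmb{1}_n = \pmb{0}_n$. This holds because, for each row $i$, the diagonal entry $\ell_{ii} = \sum_{j \in \mathcal{N}_i} a_{ij}$ exactly cancels the sum of the off-diagonal entries $\sum_{j \neq i} \ell_{ij} = -\sum_{j \neq i} a_{ij}$, as recorded in the definition of $L$ in Section~\ref{Preliminaries}. Everything else is bookkeeping against the definitions $A = -(L+G)$, $B \coloneqq G\pmb{1}_n$, and $C \coloneqq A\pmb{1}_n$.

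First I would substitute $A = -(L+G)$ from \eqref{A} and distribute $\pmb{1}_n$, giving
\begin{equation*}
A\pmb{1}_n = -(L+G)\pmb{1}_n = -L\pmb{1}_n - G\pmb{1}_n.
\end{equation*}
Invoking $L\pmb{1}_n = \pmb{0}_n$ collapses the first term, leaving $A\pmb{1}_n = -G\pmb{1}_n$. Since $B \coloneqq G\pmb{1}_n$ by definition, this is precisely the first identity $A\pmb{1}_n = -B$.

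For the second identity I would simply recall the definition $C \coloneqq A\pmb{1}_n$ and chain it with the computation above to obtain $C = A\pmb{1}_n = -G\pmb{1}_n$. Because $G = {\rm diag}\{g_1,\ldots,g_n\}$, the product $G\pmb{1}_n = [g_1,\ldots,g_n]^T$ merely reads off the diagonal pinning gains, so the stated equality $C = -G$ is to be understood in this vector sense, i.e., $C = -G\pmb{1}_n = -B$.

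The computation itself is elementary; the only point demanding care is the apparent typing mismatch in the claim $C = -G$, where $C$ is a vector while $G$ is a matrix. The hard part, such as it is, will be to make explicit at the outset that $-G$ abbreviates the vector $-G\pmb{1}_n$ of negated pinning gains, so that the identity is dimensionally consistent and coincides with $-B$. With that convention fixed, no genuine obstacle remains, and the result is a one-line consequence of $L\pmb{1}_n = \pmb{0}_n$.
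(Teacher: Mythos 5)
Your proof is correct and follows essentially the same route as the paper's: both reduce the claim to $L\pmb{1}_n = \pmb{0}_n$ and the definitions $A=-(L+G)$, $B \coloneqq G\pmb{1}_n$, $C \coloneqq A\pmb{1}_n$. Your explicit remark that the identity $C=-G$ must be read in the vector sense $C = -G\pmb{1}_n$ is a sensible clarification of a notational abuse the paper leaves implicit, but it does not change the argument.
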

\begin{proof}
Multiplying by $\pmb{1}_n$ on both sides of \eqref{A}, we have that $A \pmb{1}_n = -L\pmb{1}_n - G \pmb{1}_n = - G \pmb{1}_n = - B$, since $L \pmb{1}_n = \pmb{0}_n$ for an undirected and connected graph. Further, since $C \coloneqq A\pmb{1}_n$ by definition, it follows that $C = -G$. 
\end{proof}

For further analysis, let us introduce the frequency error vector
\begin{equation} \label{error_transformation}
e_{\omega} = \omega - \omega^{*}\pmb{1}_n,	
\end{equation}
using which, \eqref{u_w} can be expressed as:
\begin{align*}
	\dot{e}_{\omega} &= A e_{\omega} + \beta A z + A\omega^{*}\pmb{1}_n + B \omega^{*} + L d_{\omega} \\
	\dot{z} &= Az - \beta A (e_{\omega} + \omega^{*}\pmb{1}_n) + \beta C \omega^{*}.
\end{align*}
Following Lemma~\ref{lem_matrices}, the above equations are simplified as:
\begin{subequations}\label{w_and_z_dot}
	\begin{align} 
		\dot{e}_{\omega} &= A e_{\omega} + \beta A z + L d_{\omega}\\
		\dot{z} &= Az - \beta A e_{\omega}.
	\end{align}
\end{subequations}
Let $\xi_{\omega} = [e_{\omega}^T, z^T]^T \in \mathbb{R}^{2n}$ be the joint state vector, using which, \eqref{w_and_z_dot} can be written in the compact form as:
\begin{equation}\label{xi_dot}
	\dot{\xi}_{\omega} = \mathcal{K} \xi_{\omega} + D_{\omega},
\end{equation}
where $D_{\omega} = [(Ld_{\omega})^T, \pmb{0}_n^T]^T \in \mathbb{R}^{2n}$ and the block matrix $\mathcal{K}_{2n \times 2n}$ is given by
\begin{equation}    \label{K}
	\mathcal{K} = \begin{bmatrix}
		A & \beta A \\
		- \beta A & A
	\end{bmatrix}.
\end{equation}

\begin{lem}\label{lem_K_Hurwitz}
The block matrix $\mathcal{K}$ in \eqref{K} is Hurwitz.  
\end{lem}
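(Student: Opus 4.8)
The plan is to exploit the special Kronecker-product structure of $\mathcal{K}$ so as to reduce the spectral question to the (essentially known) spectrum of $A$. First I would observe that the block matrix in \eqref{K} factors as $\mathcal{K} = M \otimes A$, where $M = \bigl[\begin{smallmatrix} 1 & \beta \\ -\beta & 1 \end{smallmatrix}\bigr]$; a direct expansion of the Kronecker product confirms this, since $M \otimes A$ places $A$, $\beta A$, $-\beta A$, and $A$ in the four blocks exactly as in \eqref{K}.

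Next I would pin down the spectrum of $A$. Since $\mathcal{G}$ is undirected, $L$ is symmetric positive semidefinite, and $G$ is diagonal with nonnegative entries at least one of which equals $1$ (there is at least one leader). Consequently $A = -(L+G)$ is symmetric, and $L+G$ is positive definite: a short quadratic-form argument shows that $x^{T}(L+G)x = 0$ forces $x = c\pmb{1}_n$ from $x^{T}Lx = 0$ (connectivity) and then $c = 0$ from $x^{T}Gx = 0$ (pinning). Hence every eigenvalue $\lambda_l$ of $A$ is real and strictly negative.

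I would then invoke the product rule for the eigenvalues of a Kronecker product: the spectrum of $\mathcal{K} = M \otimes A$ is the multiset $\{\mu_k \lambda_l\}$, where $\mu_k$ ranges over the eigenvalues of $M$ and $\lambda_l$ over those of $A$. The characteristic equation of $M$, namely $(1-\mu)^2 + \beta^2 = 0$, gives $\mu = 1 \pm j_c\beta$. Therefore every eigenvalue of $\mathcal{K}$ has the form $(1 \pm j_c\beta)\lambda_l = \lambda_l \pm j_c\beta\lambda_l$, whose real part equals $\lambda_l < 0$. Since all eigenvalues lie in the open left half-plane, $\mathcal{K}$ is Hurwitz.

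The only genuinely delicate point is the negative-definiteness of $A$: Lemma~\ref{lem_1} as stated requires all diagonal entries of the perturbing matrix to be strictly positive, whereas here the pinning matrix $G$ vanishes at the follower nodes. I would therefore make explicit that the operative fact is the connectivity-plus-pinning sharpening of that lemma, and supply the one-line quadratic-form argument above rather than citing Lemma~\ref{lem_1} verbatim. Everything else is a mechanical consequence of the factorization $\mathcal{K} = M \otimes A$ and the product formula for its spectrum.
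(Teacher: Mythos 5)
Your proof is correct and follows essentially the same route as the paper: factor $\mathcal{K} = \Phi \otimes A$ with $\Phi = \bigl[\begin{smallmatrix} 1 & \beta \\ -\beta & 1 \end{smallmatrix}\bigr]$, show $A = -(L+G)$ is symmetric negative definite, and conclude via the Kronecker eigenvalue product rule that every eigenvalue of $\mathcal{K}$ has real part $\lambda_l(A) < 0$.

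The one place you genuinely depart from the paper is also the place where you improve on it. The paper establishes positive definiteness of $L+G$ by citing Lemma~\ref{lem_1}, but that lemma as stated requires \emph{all} diagonal entries of the perturbing matrix to be strictly positive, whereas here $G$ vanishes at every follower node ($g_i = 0$ for non-leaders). Your quadratic-form argument --- $x^T(L+G)x = 0$ forces $x = c\pmb{1}_n$ by connectivity and then $c = 0$ by the presence of at least one leader --- is exactly the sharpening needed to make this step airtight, and it is the standard fact used in pinning-control analyses. A second, more cosmetic point in your favor: the paper writes $\lambda_i(\mathcal{K}) = \lambda_i(\Phi)\lambda_i(A)$ indexed over $i = 1,\ldots,2n$, which conflates indices; your statement that the spectrum of $\mathcal{K}$ is the multiset $\{\mu_k \lambda_l\}$ over all pairs $(k,l)$ is the precise form of \cite[Theorem~4.2.12]{horn1991}. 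Both proofs reach the same conclusion by the same mechanism, but yours closes a small hypothesis gap that the paper leaves implicit.
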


\begin{proof}
	Using Kronecker product, \eqref{K} can be written as $\mathcal{K} = \Phi \otimes A$, where 
	\begin{equation}\label{phi}
		\Phi = \begin{bmatrix}
			1 & \beta  \\
			- \beta  & 1
		\end{bmatrix},
	\end{equation}
	is a $2 \times 2$ matrix with eigenvalues $\lambda(\Phi) = 1 \pm j_c \beta$. Further, since graph $\mathcal{G}$ is undirected and connected, $L$ is positive semi-definite \cite{wang2010}. Also, matrix $G$ contains at least one non-zero diagonal entry (as there is at least one leader in the group). Using Lemma~\ref{lem_1} from Subsection~\ref{Preliminaries}, it can be stated that $L+G$ is symmetric and positive-definite. Alternatively, $A = -(L+G)$, as defined in \eqref{A}, is symmetric and negative-definite, and hence, has all real and strictly negative eigenvalues $\lambda_i(A) < 0, \forall i = 1, \ldots, n$. Now, using the multiplicative property of Kronecker product \cite[Theorem~4.2.12]{horn1991}, it can be inferred that $\lambda_i(\mathcal{K}) = \lambda_i(\Phi) \lambda_i(A), \forall i = 1, \ldots, 2n$. Alternatively, all $2n$ eigenvalues of $\mathcal{K}$ are complex conjugate having strictly negative real part, implying that $\mathcal{K}$ is Hurwitz.   
\end{proof}

For better motivation and simplicity, we first discuss that the proposed framework \eqref{xi_dot} assures the frequency consensus in the absence of an attack, followed by the result in the attacked scenario.

\begin{lem}[Frequency control in absence of attack]\label{lem_freq_attack_absence}
If $d_{\omega} \equiv \pmb{0}_n$, the dynamics \eqref{xi_dot} assures that $\omega \to \omega^\star \pmb{1}_n$ as $t \to \infty$. 
\end{lem}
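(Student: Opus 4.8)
The plan is to exploit the fact that switching off the attack collapses the compact dynamics \eqref{xi_dot} into an autonomous linear system whose state matrix $\mathcal{K}$ has already been certified as Hurwitz in Lemma~\ref{lem_K_Hurwitz}. The whole argument is therefore essentially a corollary of that lemma, and no Lyapunov construction or error-bound estimate is needed at this stage.

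First, I would substitute $d_{\omega} \equiv \pmb{0}_n$ directly into \eqref{xi_dot}. Since the forcing term is structured as $D_{\omega} = [(Ld_{\omega})^T, \pmb{0}_n^T]^T$, this choice makes $D_{\omega} = \pmb{0}_{2n}$, so the joint-error dynamics reduce to the homogeneous linear system $\dot{\xi}_{\omega} = \mathcal{K}\xi_{\omega}$, with solution $\xi_{\omega}(t) = \mathrm{e}^{\mathcal{K}t}\xi_{\omega}(0)$ for any initial condition $\xi_{\omega}(0)$.

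Next, I would invoke Lemma~\ref{lem_K_Hurwitz} to conclude that every eigenvalue of $\mathcal{K}$ has strictly negative real part. For an autonomous linear system governed by a Hurwitz matrix, the origin is globally exponentially stable, whence $\xi_{\omega}(t) \to \pmb{0}_{2n}$ as $t \to \infty$. Reading off the block structure $\xi_{\omega} = [e_{\omega}^T, z^T]^T$ then gives $e_{\omega} \to \pmb{0}_n$ (and incidentally $z \to \pmb{0}_n$), and recalling the error transformation \eqref{error_transformation}, namely $e_{\omega} = \omega - \omega^{*}\pmb{1}_n$, this is exactly the assertion $\omega \to \omega^{*}\pmb{1}_n$, completing the proof.

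I do not anticipate a genuine obstacle here: the substantive work—the Kronecker factorization $\mathcal{K} = \Phi \otimes A$ together with the negative-definiteness of $A = -(L+G)$—was front-loaded into Lemma~\ref{lem_K_Hurwitz}, so this result follows immediately. The only point worth stating carefully is that the Hurwitz property delivers \emph{global} (not merely local) convergence, which is what licenses the claim for arbitrary initial frequencies and auxiliary states.
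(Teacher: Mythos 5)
Your proposal is correct and follows essentially the same route as the paper's own proof: substitute $d_{\omega} \equiv \pmb{0}_n$ to reduce \eqref{xi_dot} to the homogeneous system $\dot{\xi}_{\omega} = \mathcal{K}\xi_{\omega}$, invoke Lemma~\ref{lem_K_Hurwitz} to get $\xi_{\omega} \to \pmb{0}_{2n}$, and translate back through \eqref{error_transformation}. Your added observation that $z \to \pmb{0}_n$ as well is accurate (the paper merely notes the steady-state value of $z$ is unimportant), and your emphasis on global convergence from the Hurwitz property is a fair, if minor, sharpening of the same argument.
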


\begin{proof}
For $d_{\omega} \equiv \pmb{0}_n$, \eqref{xi_dot} becomes $\dot{\xi}_{\omega} = \mathcal{K} \xi_{\omega}$ and has the solution $\xi_{\omega} (t) = {\rm e}^{\mathcal{K} t} \xi_{\omega} (0)$. Now, it is straightforward to conclude that  $\xi_{\omega} \to \pmb{0}_{2n}$ at $t \to \infty$, since $\mathcal{K}$ is Hurwitz (see Lemma~\ref{lem_K_Hurwitz}). In other words, $\omega \to \omega^\star \pmb{1}_n$ as $t \to \infty$ in absence of $d_{\omega}$ (the steady-state value of the auxiliary state $z$ is not important).
\end{proof}

If $d_{\omega} \neq \pmb{0}_n$, we have the following convergence theorem for microgrid frequency, addressing the problem {\bf (P1)}. 
\begin{thm}[Frequency control in presence of attack]\label{thm_1}
Consider the system \eqref{xi_dot} where the frequency attack signal $d_{\omega} \neq \pmb{0}_n$ satisfies Assumption~\ref{assumption}. Then, for a sufficiently large value of gain $\beta$, the frequencies of all DGs remain in a small neighborhood of nominal frequency, i.e., $\|\omega - \omega^\star \pmb{1}_n\| \leq \epsilon_{\omega}$ for some small $\epsilon_{\omega} > 0$.
\end{thm}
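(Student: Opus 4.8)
The plan is to treat \eqref{xi_dot} as a linear time-invariant system driven by the bounded disturbance $D_{\omega}$, and to exploit the Hurwitz property of $\mathcal{K}$ (Lemma~\ref{lem_K_Hurwitz}) together with the Kronecker structure $\mathcal{K} = \Phi \otimes A$ to extract an explicit $\beta$-dependence of the steady-state error bound. First I would write the solution by variation of parameters as
\begin{equation*}
\xi_{\omega}(t) = {\rm e}^{\mathcal{K}t}\xi_{\omega}(0) + \int_{0}^{t} {\rm e}^{\mathcal{K}(t-\tau)} D_{\omega}(\tau)\,d\tau.
\end{equation*}
Since $\mathcal{K}$ is Hurwitz, the homogeneous term ${\rm e}^{\mathcal{K}t}\xi_{\omega}(0)$ decays to $\pmb{0}_{2n}$, so only the forced response governs the asymptotic behaviour.

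To bound the forced term, observe that Assumption~\ref{assumption} and submultiplicativity give $\|D_{\omega}(\tau)\| = \|Ld_{\omega}\| \leq \|L\|\bar{D}_{\omega} =: \Omega$ for all $\tau$. I would then invoke Lemma~\ref{lem_2} with $Q = \mathcal{K}$ and $p(\tau) = D_{\omega}(\tau)$, which furnishes a constant vector $\Psi$ and a finite $T$ such that, for all $t \geq T$,
\begin{equation*}
\left\|\int_{0}^{t}{\rm e}^{\mathcal{K}(t-\tau)}D_{\omega}(\tau)\,d\tau\right\| \leq \left\|\int_{0}^{t}{\rm e}^{\mathcal{K}(t-\tau)}\Psi\,d\tau\right\|.
\end{equation*}
Because $\mathcal{K}$ is Hurwitz, letting $t \to \infty$ yields $\int_{0}^{t}{\rm e}^{\mathcal{K}(t-\tau)}\Psi\,d\tau \to -\mathcal{K}^{-1}\Psi$, so the asymptotic bound reduces to $\|\mathcal{K}^{-1}\Psi\| \leq \|\mathcal{K}^{-1}\|\,\|\Psi\|$.

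The decisive step is to compute $\|\mathcal{K}^{-1}\|$ through the Kronecker factorization. Since $\mathcal{K} = \Phi \otimes A$, we have $\mathcal{K}^{-1} = \Phi^{-1}\otimes A^{-1}$, whence $\|\mathcal{K}^{-1}\| = \|\Phi^{-1}\|\,\|A^{-1}\|$. A direct computation of $\Phi^{-1}(\Phi^{-1})^T = (1+\beta^2)^{-1}I_2$ gives $\|\Phi^{-1}\| = 1/\sqrt{1+\beta^2}$, while $\|A^{-1}\| = 1/\lambda_{\min}(L+G)$ by the symmetric negative-definiteness of $A$ established in Lemma~\ref{lem_K_Hurwitz}. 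Combining the estimates and using $\|\omega - \omega^\star\pmb{1}_n\| = \|e_{\omega}\| \leq \|\xi_{\omega}\|$, I obtain the asymptotic bound
\begin{equation*}
\|\omega - \omega^\star\pmb{1}_n\| \leq \frac{\|\Psi\|}{\sqrt{1+\beta^2}\;\lambda_{\min}(L+G)} =: \epsilon_{\omega},
\end{equation*}
which transparently exhibits the dependence on the attack magnitude (through $\Psi$, itself governed by $\Omega = \|L\|\bar{D}_{\omega}$), the network and pinning gain (through $\lambda_{\min}(L+G)$), and the control gain $\beta$. Since $\epsilon_{\omega} \to 0$ as $\beta \to \infty$, a sufficiently large $\beta$ forces the frequencies into an arbitrarily small neighbourhood of $\omega^\star\pmb{1}_n$, proving the claim.

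I expect the main obstacle to be that $d_{\omega}$ is time-varying and state-dependent rather than constant, so the forced response cannot be identified with $-\mathcal{K}^{-1}D_{\omega}$ directly; the role of Lemma~\ref{lem_2} is precisely to replace the unknown bounded trajectory by a constant surrogate $\Psi$ so that the Hurwitz resolvent $-\mathcal{K}^{-1}$ can act. A secondary subtlety is verifying that $\|\Psi\|$ is controlled by $\Omega$ and does not itself inflate with $\beta$, so that the $1/\sqrt{1+\beta^2}$ factor genuinely drives $\epsilon_{\omega}$ to zero.
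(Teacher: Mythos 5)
Your proof is correct and follows the same skeleton as the paper's: variation of parameters for \eqref{xi_dot}, decay of the homogeneous term via the Hurwitz property (Lemma~\ref{lem_K_Hurwitz}), replacement of the bounded disturbance by a constant surrogate via Lemma~\ref{lem_2}, and an asymptotic bound of the form $\|\mathcal{K}^{-1}(\text{const})\|$. The one place you genuinely diverge is the decisive linear-algebra step. The paper keeps the block structure of the disturbance, taking the surrogate as $\hat{D}_{\omega} = [(L\hat{d}_{\omega})^T, \pmb{0}_n^T]^T$ with $\|\hat{d}_{\omega}\| \leq \bar{D}_{\omega}$, inverts $\mathcal{K}$ blockwise via \eqref{K_inverse} so that only the first block column of $\mathcal{K}^{-1}$ (the matrix $\mathcal{H}_{\beta}$ of Lemma~\ref{lem_H_beta}) acts on $L\hat{d}_{\omega}$, and then computes $\|\mathcal{H}_{\beta}\| = 1/(\lambda_{\min}(L+G)\sqrt{1+\beta^2})$ in that dedicated lemma. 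You instead discard the zero block and use the cruder bound $\|\mathcal{K}^{-1}\Psi\| \leq \|\mathcal{K}^{-1}\|\,\|\Psi\|$ together with Kronecker multiplicativity, $\|\mathcal{K}^{-1}\| = \|\Phi^{-1}\|\,\|A^{-1}\|$. In general this relaxation could be lossy, but here it costs nothing: $\Phi^{-1}$ is $(1+\beta^2)^{-1/2}$ times an orthogonal matrix, so $\|\mathcal{K}^{-1}\| = \|A^{-1}\|/\sqrt{1+\beta^2}$, which is exactly $\|\mathcal{H}_{\beta}\|$; the two routes yield identical constants, and your computation arguably gets there with less machinery (no block-inversion formula needed). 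On the subtlety you flag about $\|\Psi\|$: the paper handles it precisely as you anticipate, asserting that the constant surrogate can be taken in the structured form above with $\|\hat{d}_{\omega}\| \leq \bar{D}_{\omega}$, so that $\|\Psi\| \leq \lambda_{\max}(L)\bar{D}_{\omega}$ independently of $\beta$, which turns your bound into the paper's explicit $\epsilon_{\omega} = \lambda_{\max}(L)\bar{D}_{\omega}/\bigl(\lambda_{\min}(L+G)\sqrt{1+\beta^2}\bigr)$. Note that this norm control on the surrogate is not literally stated in Lemma~\ref{lem_2}, so both you and the paper lean on the same implicit strengthening of that lemma; you are on equal footing with the paper there.
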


Before proceeding to the proof, we first discuss the below important result: 

\begin{lem}\label{lem_H_beta}
Let $\mathcal{H}_{\beta} = \begin{bmatrix}
	(A + {\beta}^{2}A)^{-1} \\
	\beta (A + {\beta}^2 A)^{-1} 
\end{bmatrix}$ be a block matrix of dimension $2n \times n$, where $A$ is given by \eqref{A}. The operator norm of $\mathcal{H}_{\beta}$ is given by
\begin{equation*}
\|\mathcal{H}_{\beta}\| =  \frac{1}{\lambda_{\min}(L + G) \sqrt{1 + \beta^2}}.
\end{equation*}  
\end{lem}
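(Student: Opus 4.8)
The plan is to reduce the operator-norm computation to a single eigenvalue problem for a symmetric matrix, exploiting that both stacked blocks of $\mathcal{H}_{\beta}$ are scalar multiples of one common symmetric matrix. First I would collapse the repeated factor: since $A + \beta^2 A = (1+\beta^2)A$, the inverse is $(A + \beta^2 A)^{-1} = \frac{1}{1+\beta^2}A^{-1}$. Writing $M := (A+\beta^2 A)^{-1}$, the block matrix takes the simple form $\mathcal{H}_{\beta} = [\,M^T,\ \beta M^T\,]^T$, i.e. its two blocks are $M$ and $\beta M$.

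Next, by the definition of the induced $2$-norm recalled in Subsection~\ref{Preliminaries}, $\|\mathcal{H}_{\beta}\| = \sqrt{\lambda_{\max}(\mathcal{H}_{\beta}^T \mathcal{H}_{\beta})}$, so I would form the $n\times n$ Gram matrix via block multiplication:
\begin{equation*}
\mathcal{H}_{\beta}^T \mathcal{H}_{\beta} = M^T M + \beta^2 M^T M = (1+\beta^2)\,M^T M.
\end{equation*}
Because $A = -(L+G)$ is symmetric (as established in the proof of Lemma~\ref{lem_K_Hurwitz}), both $A^{-1}$ and hence $M$ are symmetric, so $M^T M = M^2$. Substituting $M = \frac{1}{1+\beta^2}A^{-1}$ then gives $\mathcal{H}_{\beta}^T \mathcal{H}_{\beta} = (1+\beta^2)\cdot \frac{1}{(1+\beta^2)^2}A^{-2} = \frac{1}{1+\beta^2}A^{-2}$.

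The final step is spectral. From Lemma~\ref{lem_K_Hurwitz} (together with Lemma~\ref{lem_1}), $L+G$ is symmetric positive-definite and $A=-(L+G)$ is negative-definite, so $A^{-2} = (L+G)^{-2}$ has eigenvalues $1/\lambda_i(L+G)^2$. The largest of these is $1/\lambda_{\min}(L+G)^2$, whence $\lambda_{\max}(\mathcal{H}_{\beta}^T\mathcal{H}_{\beta}) = \frac{1}{(1+\beta^2)\lambda_{\min}(L+G)^2}$, and taking the square root yields the claimed $\|\mathcal{H}_{\beta}\| = \frac{1}{\lambda_{\min}(L+G)\sqrt{1+\beta^2}}$.

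I do not anticipate a genuine obstacle here; the only place demanding care rather than ingenuity is the bookkeeping of the two separate appearances of the factor $1+\beta^2$—one from collapsing $A+\beta^2 A$ and one from the block inner product—and correctly matching the \emph{largest} eigenvalue of $A^{-2}$ to the \emph{smallest} eigenvalue of $L+G$. Everything else follows directly from the symmetry of $A$ and the positive-definiteness of $L+G$.
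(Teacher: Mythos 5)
Your proof is correct, but it follows a genuinely different route from the paper's. The paper factors $\mathcal{H}_{\beta}$ as a Kronecker product, $\mathcal{H}_{\beta} = [\tfrac{1}{1+\beta^2},\ \tfrac{\beta}{1+\beta^2}]^T \otimes A^{-1}$, invokes the identity $\|\mathcal{A}\otimes\mathcal{B}\| = \|\mathcal{A}\|\,\|\mathcal{B}\|$ for operator norms of Kronecker products (with an external citation), and then multiplies the norm of the $2\times 1$ vector, namely $\tfrac{1}{\sqrt{1+\beta^2}}$, by $\|A^{-1}\| = 1/\lambda_{\min}(L+G)$, the latter computed using the symmetry of $A$ exactly as you do. You instead work directly from the definition of the induced $2$-norm: block multiplication gives the Gram matrix $\mathcal{H}_{\beta}^T\mathcal{H}_{\beta} = (1+\beta^2)M^T M = \tfrac{1}{1+\beta^2}A^{-2}$ with $M = \tfrac{1}{1+\beta^2}A^{-1}$, and since $A^{-2} = (L+G)^{-2}$ is symmetric positive-definite, its largest eigenvalue is $1/\lambda_{\min}(L+G)^2$, which yields the claim after a square root. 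Both arguments rest on the same structural facts --- symmetry of $A$ and positive-definiteness of $L+G$ from Lemma~\ref{lem_1} --- but yours is more elementary and self-contained: it needs only the definition $\|M\|=\sqrt{\lambda_{\max}(M^T M)}$ already recalled in the preliminaries, rather than the Kronecker norm identity. What the paper's factorization buys is consistency with the $\Phi\otimes A$ viewpoint used in Lemma~\ref{lem_K_Hurwitz} and a transparent separation of the design-gain factor $\tfrac{1}{\sqrt{1+\beta^2}}$ from the network-dependent factor $1/\lambda_{\min}(L+G)$. Your bookkeeping of the two occurrences of $1+\beta^2$, and the matching of $\lambda_{\max}(A^{-2})$ to $\lambda_{\min}(L+G)$, are exactly the delicate points, and you handle both correctly.
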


\begin{proof}
Using Kronecker product, the matrix $\mathcal{H}_{\beta}$ can be rewritten as $\mathcal{H}_{\beta} = \begin{bmatrix}
	\frac{1}{1 + \beta^2} \\
	\frac{\beta}{1 + \beta^2} 
\end{bmatrix} \otimes A^{-1}$. Taking operator norm on both sides, we have 
\begin{equation*}
\|\mathcal{H}_{\beta}\| = \left\| \begin{bmatrix}
	\frac{1}{1 + \beta^2} \\
	\frac{\beta}{1 + \beta^2} 
\end{bmatrix} \otimes A^{-1} \right\| = \left\| \begin{bmatrix}
\frac{1}{1 + \beta^2} \\
\frac{\beta}{1 + \beta^2} 
\end{bmatrix} \right\| \|A^{-1}\|,
\end{equation*}
using the property $\|\mathcal{A} \otimes \mathcal{B}\| = \|\mathcal{A}\|\|\mathcal{B}\|$ of the operator norms for Kronecker products of two matrices $\mathcal{A}$ and $\mathcal{B}$ \cite[Theorem 8, p. 412]{lancaster1972}. Note that $\left\| \begin{bmatrix}
	\frac{1}{1 + \beta^2} \\
	\frac{\beta}{1 + \beta^2} 
\end{bmatrix} \right\| = \frac{1}{\sqrt{1 + \beta^2}}$ and $\|A^{-1}\| = \sqrt{\lambda_{\max}((A^{-1})^T A^{-1})}$. From \eqref{A}, it is straightforward to check that $A^T = A \implies (A^T)^{-1} = A^{-1} \implies (A^{-1})^T = A^{-1}$. The last relation can also be rewritten as $((-A)^{-1})^T = (-A)^{-1}$, where matrix $-A$ is positive-definite. Using this, one can write $\|A^{-1}\| = \sqrt{\lambda_{\max}((-A^{-1})^T (-A)^{-1})} = \lambda_{\max}(-A^{-1}) = \frac{1}{\lambda_{\min}(-A)}$. Consequently, it can be concluded using \eqref{A} that $\|\mathcal{H}_{\beta}\| =  \frac{1}{\lambda_{\min}(L + G) \sqrt{1 + \beta^2}}$, and hence, proving the result.  
\end{proof}

We are now ready to prove Theorem~\ref{thm_1}.

\begin{proof}[Proof of Theorem~\ref{thm_1}]
If $d_{\omega} \neq \pmb{0}_n$, the solution of linear system \eqref{xi_dot} is obtained as $\xi_{\omega} (t) = {\rm e}^{\mathcal{K} t} \xi_{\omega} (0) + \int_{0}^{t} {\rm e}^{\mathcal{K}(t- \tau)}D_{\omega} d\tau$. Now, taking $\mathcal{L}_2$ norm on both sides and applying the Cauchy–Schwarz inequality results in $\lim_{t \to \infty} \|\xi_{\omega}(t) \| \leq \lim_{t \to \infty} \|{\rm e}^{\mathcal{K}t} \xi_{\omega}(0)\| + \lim_{t \to \infty} \|\int_{0}^{t} {\rm e}^{\mathcal{K}(t- \tau)}D_{\omega} d\tau \|$. Since $\lim_{t \to \infty} \|{\rm e}^{\mathcal{K}t} \xi_{\omega}(0) \| \to 0$, as $\mathcal{K}$ is Hurwitz (see Lemma~\ref{lem_K_Hurwitz}), it can be written that $\lim_{t \to \infty} \|\xi_{\omega}(t)\| \leq \lim_{t \to \infty} \|\int_{0}^{t} {\rm e}^{\mathcal{K}(t- \tau)}D_{\omega} d\tau \|$. Note that $D_{\omega}$ is uniformly bounded, as it is $d_{\omega}$, according to Assumption~\ref{assumption}. Now, it immediately follows from Lemma~\ref{lem_2} (from Subsection~\ref{Preliminaries}) that there exists a time-independent vector $\hat{d}_{\omega} \in \mathbb{R}^n$ with $\|\hat{d}_{\omega}\| \leq \bar{D}_{\omega}$ (where $\bar{D}_{\omega}$ is defined in Assumption~\ref{assumption}) such that $\|\int_{0}^{t} {\rm e}^{\mathcal{K}(t- \tau)}D_{\omega}(\tau) d\tau \| \leq \|\int_{0}^{t} {\rm e}^{\mathcal{K}(t- \tau)}\hat{D}_{\omega} d\tau \|$ for some constant vector $\hat{D}_{\omega} = [(L \hat{d}_{\omega})^T, \pmb{0}_n^T]^T \in \mathbb{R}^{2n}$ for all $t \geq T$. This implies that $\lim_{t \to \infty} \|\xi_{\omega}(t) \| \leq \lim_{t \to \infty} \|\int_{0}^{t} {\rm e}^{\mathcal{K}(t- \tau)}\hat{D}_{\omega} d\tau \|$, which on further simplification results in $\lim_{t \to \infty} \| \xi_{\omega}(t) \|\leq \| \mathcal{K}^{-1} \hat{D}_{\omega} \|$. Note that $\mathcal{K}$ is invertible as it has all non-zero eigenvalues (Lemma~\ref{lem_K_Hurwitz}) and its inverse is given by \cite[Theorem~0.7.3]{horn2012matrix}:
\begin{equation}\label{K_inverse}
\mathcal{K}^{-1} = \begin{bmatrix}
         \mathcal{M}_{11} & \mathcal{M}_{12}  \\
        \mathcal{M}_{21}  & \mathcal{M}_{22}
    \end{bmatrix},
\end{equation}
using which, it holds that
\begin{equation}\label{e_w_z_norm}
\lim_{t \to \infty}  \| \xi_{\omega}(t) \| \leq \left\| 
\begin{bmatrix}
		 \mathcal{M}_{11} (L\hat{d}_{\omega}) \\
		\mathcal{M}_{21} (L\hat{d}_{\omega}) 
	\end{bmatrix} \right\|,
\end{equation}
where $\mathcal{M}_{11} = (\mathcal{K}_{11} - \mathcal{K}_{12} \mathcal{K}_{22}^{-1} \mathcal{K}_{21})^{-1}$ and $\mathcal{M}_{21} = \mathcal{K}_{22}^{-1} \mathcal{K}_{21}(\mathcal{K}_{12} \mathcal{K}_{22}^{-1} \mathcal{K}_{21} - \mathcal{K}_{11})^{-1}$ \cite[Theorem~0.7.3]{horn2012matrix}, with $\mathcal{K}_{11} = \mathcal{K}_{22} = A$ and $\mathcal{K}_{12} = -\mathcal{K}_{21} = \beta A$, from \eqref{K}. Substituting these, we get $\mathcal{M}_{11} = (A + {\beta}^{2}A)^{-1}$ and $\mathcal{M}_{21} =  \beta (A + {\beta}^2 A)^{-1}$, and hence, 
\begin{equation*} 
\lim_{t \to \infty} \left\| \begin{bmatrix}
        e_{\omega} (t) \\
        z (t)
     \end{bmatrix} \right\| \leq \left\| \begin{bmatrix}
         (A + {\beta}^{2}A)^{-1} \\
         \beta (A + {\beta}^2 A)^{-1} 
     \end{bmatrix} \right\| \|L\hat{d}_{\omega}\| = \|\mathcal{H}_{\beta}\| \|L\hat{d}_{\omega}\|,
\end{equation*}
where we have replaced the first term by $\|\mathcal{H}_{\beta}\|$, using Lemma~\ref{lem_H_beta}. Since $\|L\hat{d}_{\omega}\| \leq \|L\| \|\hat{d}_{\omega}\| = \sqrt{\lambda_{\max}(L^T L)} \|\hat{d}_{\omega}\| = \lambda_{\max}(L) \|\hat{d}_{\omega}\|$ (since $L = L^T$ for an undirected and connected graph $\mathcal{G}$) and $\|\hat{d}_{\omega}\| \leq \bar{D}_{\omega}$ (see Assumption~\ref{assumption}), it holds that 
\begin{equation} \label{e_w_norm}
\lim_{t \to \infty} \left\| \begin{bmatrix}
	e_{\omega} (t) \\
	z (t)
\end{bmatrix} \right\| \leq \frac{\lambda_{\max}(L) \bar{D}_{\omega}}{\lambda_{\min}(L + G) \sqrt{1 + \beta^2}},
\end{equation}
exploiting Lemma~\ref{lem_H_beta}. From \eqref{e_w_norm}, one can conclude that $\|e_{\omega}(t)\| \leq \epsilon_{\omega}$ as $t \to \infty$, with $\epsilon_{\omega} = \frac{\lambda_{\max}(L) \bar{D}_{\omega}}{\lambda_{\min}(L + G) \sqrt{1 + \beta^2}}$. Further, since $\epsilon_{\omega}$ is small for sufficiently large values of gain $\beta$ (which is a design parameter), $\omega$ converges to a small neighborhood around ${\omega}^{*}\pmb{1}_n$ as $t \to \infty$, using \eqref{error_transformation}. 
\end{proof}

\begin{remark}
It can be inferred from inequality \eqref{e_w_norm} that the steady-state bound $\epsilon_{\omega}$ is:
\begin{itemize}
\item proportional to the maximum eigenvalue $\lambda_{\max}(L)$ of Laplacian $L$ and the attack bound $\bar{D}_{\omega}$. If the graph $\mathcal{G}$ has minimal connectivity (i.e., small $\lambda_{\max}(L)$), the attacker will have fewer links to target, and hence, will have relatively less impact.
\item inversely proportional to the minimum eigenvalue $\lambda_{\min}(L+G)$ of matrix $L+G$ and the design parameter $\beta$. Beside selecting large $\beta$, it is possible to make $\lambda_{\min}(L+G)$ large by choosing the large value of pinning gain $g_i$, since $\lambda_{\min}(L+G) \geq \lambda_{\min}(L) + \min_i \{g_i\} = \min_i \{g_i\}$, using Weyl's theorem \cite[Chapter 4, p. 239]{horn2012matrix} and the fact that $\lambda_{\min}(L) = 0$. Therefore, $\lambda_{\min}(L+G)$ primarily influenced by the pinning gain $g_i$. 
\end{itemize} 
In summary, it is evident that the effect of varying $\beta$ is more dominant on the value of $\epsilon_{\omega}$, as compared to $g_i$, and can be decided appropriately by the user. 
\end{remark}

\section{Distributed Attack Resilient Power Control}\label{Power_control}
Unlike frequency control, the implementation of the distributed power controller is done in a leaderless configuration to assure the proportional power-sharing among the DGs, leading to the system $S_{P}$ in \eqref{coupled_model} as follows:
\begin{subequations}\label{u_P}
\begin{align}
m_P\dot{P} &= -L(m_P P) + \beta L z + L{d_P} \label{u_P_1}\\
\dot{z} &= -L z - \beta L(m_P P), \label{u_P_2}
\end{align}
\end{subequations}
where Laplacian $L$ is associated with the state vector $m_P P$, instead of matrix $A$ as in \eqref{u_w}. Here, the interconnection between $\Sigma$ and $\Pi$ is realized by the matrix $\beta L$. Defining power-sharing error as (where $\Delta_P$ is defined in Problem {\bf (P2)})
\begin{equation}\label{power_error}
e_P = m_P P - \Delta_P\pmb{1}_n,	
\end{equation}
\eqref{u_P} can be expressed in terms of $e_P$ as:
\begin{subequations}\label{power_error_dynamics}
\begin{align}
\dot{e}_P &= -L e_P + \beta L z + L d_P\\
\dot{z} &= - \beta L e_P - L z. 
\end{align}
\end{subequations}

We have the following lemma in absence of power attacks: 

\begin{lem}[Power-sharing control in absence of attack]\label{lem_power_attack_absence}
If $d_{P} \equiv \pmb{0}_n$, the dynamics \eqref{power_error_dynamics} assures that $P \to \Delta_P \pmb{1}_n$ as $t \to \infty$. 
\end{lem}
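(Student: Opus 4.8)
The plan is to reuse the compact-form strategy of Lemma~\ref{lem_freq_attack_absence}, while confronting the one genuinely new feature of the leaderless design: the power dynamics are driven by the Laplacian $L$ rather than the negative-definite matrix $A=-(L+G)$, so the system matrix will no longer be Hurwitz. Setting $d_P \equiv \pmb{0}_n$ in \eqref{power_error_dynamics} and stacking $\xi_P = [e_P^T,\, z^T]^T \in \mathbb{R}^{2n}$, I would write the closed loop as $\dot{\xi}_P = \mathcal{K}_P\,\xi_P$ with
\begin{equation*}
\mathcal{K}_P = \begin{bmatrix} -L & \beta L \\ -\beta L & -L \end{bmatrix} = \Theta_\beta \otimes L, \qquad \Theta_\beta = \begin{bmatrix} -1 & \beta \\ -\beta & -1 \end{bmatrix}.
\end{equation*}
Exactly as in the proof of Lemma~\ref{lem_K_Hurwitz}, the multiplicative property of the Kronecker product yields the spectrum of $\mathcal{K}_P$ as $\{(-1 \pm j_c\beta)\,\lambda_k(L)\}_{k=1}^{n}$. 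For a connected undirected graph $\lambda_1(L)=0$ is simple and $\lambda_k(L)>0$ for $k\ge 2$; hence $\mathcal{K}_P$ has a double eigenvalue at the origin, while every remaining eigenvalue has strictly negative real part $-\lambda_k(L)$.

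Since $\mathcal{K}_P$ is not Hurwitz, the one-line argument of Lemma~\ref{lem_freq_attack_absence} fails, and this pair of marginal modes is the crux of the proof. The key fact I would establish is that the zero eigenvalue is \emph{semisimple}: $\Theta_\beta$ has the distinct eigenvalues $-1\pm j_c\beta$ and $L=L^T$, so both factors are diagonalizable and therefore so is $\mathcal{K}_P$ (equivalently, $\mathcal{K}_P$ is normal because $\Theta_\beta\Theta_\beta^T = (1+\beta^2)I_2$). Consequently $\mathrm{e}^{\mathcal{K}_P t}$ contains no polynomial-in-$t$ terms and converges, as $t\to\infty$, to the spectral projector onto $\ker\mathcal{K}_P$ along the decaying modes. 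To identify this kernel, note that $\Theta_\beta$ is invertible, so $\mathcal{K}_P v = \pmb{0}_{2n} \Leftrightarrow (I_2\otimes L)v=\pmb{0}_{2n}$, i.e.\ both blocks of $v$ lie in $\ker L = \mathrm{span}(\pmb{1}_n)$; thus $\ker\mathcal{K}_P = \mathrm{span}\{[\pmb{1}_n^T,\pmb{0}_n^T]^T,\,[\pmb{0}_n^T,\pmb{1}_n^T]^T\}$.

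Finally, to single out the surviving limit inside this two-dimensional kernel I would invoke the conservation laws supplied by the left null vectors of $\mathcal{K}_P$. Because $L=L^T$ and $L\pmb{1}_n=\pmb{0}_n$, one has $\tfrac{d}{dt}(\pmb{1}_n^T e_P) = \pmb{1}_n^T(-Le_P+\beta L z)=0$ and, likewise, $\tfrac{d}{dt}(\pmb{1}_n^T z)=0$, so $\pmb{1}_n^T e_P$ and $\pmb{1}_n^T z$ are invariant along trajectories. Crucially, the definition of $\Delta_P$ gives $\pmb{1}_n^T e_P(0) = \sum_{i=1}^{n} m_{P_i}P_i(0) - n\Delta_P = 0$. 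Expressing the limit in the kernel basis, its $e_P$-block equals $\tfrac{1}{n}(\pmb{1}_n^T e_P(0))\pmb{1}_n = \pmb{0}_n$, so that $e_P(t)\to\pmb{0}_n$, i.e.\ $m_P P \to \Delta_P\pmb{1}_n$, which is exactly the proportional power-sharing claimed (the auxiliary block settles harmlessly at the constant $\tfrac{1}{n}(\pmb{1}_n^T z(0))\pmb{1}_n$). I expect the handling of the two zero eigenvalues to be the main obstacle: one must argue their semisimplicity to keep the trajectory bounded, and then use the conserved quantity $\pmb{1}_n^T e_P\equiv 0$ to force convergence to the proportional-sharing point rather than to an arbitrary consensus value.
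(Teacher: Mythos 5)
Your proof is correct, and it takes a genuinely different route from the paper's. The paper argues via Lyapunov/LaSalle: with $V_P = \tfrac{1}{2}e_P^T e_P + \tfrac{1}{2}z^T z$, the skew-symmetric inter-layer coupling $\pm\beta L$ cancels in $\dot{V}_P$ (using $L = L^T$), leaving $\dot{V}_P = -e_P^T L e_P - z^T L z \leq 0$, after which LaSalle's invariance theorem is invoked to conclude $e_P = z = \pmb{0}_n$. You instead analyze the system matrix $\Theta_\beta \otimes L$ spectrally; note that your $\Theta_\beta$ is precisely the matrix $\tilde{\Phi}$ the paper later introduces in \eqref{K_p}, and your treatment of the zero eigenvalue parallels the diagonalization \eqref{T}--\eqref{V} that the paper reserves for the attacked case, so in effect you apply the machinery of Theorem~\ref{thm_2} to the unattacked setting. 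Your route is longer but buys completeness on exactly the point the paper passes over quickly: the set where $\dot{V}_P = 0$ is $\{(e_P, z) : e_P, z \in \mathrm{span}(\pmb{1}_n)\}$, every point of which is an equilibrium of \eqref{power_error_dynamics}, so this two-dimensional subspace is itself the largest invariant set and LaSalle alone yields convergence only to that subspace, not to the origin; singling out the actual limit requires precisely the conserved quantities you supply, $\pmb{1}_n^T e_P(t) \equiv \pmb{1}_n^T e_P(0) = 0$ (zero by the definition of $\Delta_P$) and $\pmb{1}_n^T z(t) \equiv \pmb{1}_n^T z(0)$. Relatedly, your conclusion that $z$ settles at $\tfrac{1}{n}(\pmb{1}_n^T z(0))\pmb{1}_n$ rather than at $\pmb{0}_n$ is the accurate one; this is harmless for the lemma, which claims only $m_P P \to \Delta_P \pmb{1}_n$, and is consistent with the paper's earlier remark that the steady-state value of the auxiliary state is immaterial. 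What the paper's approach buys is brevity and a standard template; to be fully rigorous it would need to be supplemented by the same mean-conservation argument that closes your proof.
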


\begin{proof}
Consider the candidate Lyapunov function $V_P = 0.5e_p^T e_p + 0.5z^Tz$, whose time-derivative along \eqref{power_error_dynamics} with $d_{P} \equiv \pmb{0}_n$ is obtained as $\dot{V}_p = e_p^T \dot{e}_p + z^T \dot{z} = e_p^T[-L(m_P P) + \beta Lz] + z^T [-Lz - \beta L (m_p P)]$, where we used the fact that $L \Delta_P \pmb{1}_n = \pmb{0}_n$ for an undirected and connected graph $\mathcal{G}$. On further simplification, one can get that $\dot{V}_P = -e_p^T L e_p - z^T L z \leq 0$, which is negative semi-definite. Now, using LaSalle's invariance theorem \cite[Corollary 4.2, p. 129]{khalil2002nonlinear}, it can be concluded that $e_P = z = \pmb{0}_n$ is the desired equilibrium point, implying that, $P \to \Delta_P \pmb{1}_n$ as $t \to \infty$, using \eqref{power_error}.   
\end{proof}

In case $d_{P} \neq \pmb{0}_n$, the analysis is different from the earlier Theorem~\ref{thm_1}, as the system matrix $L$ in \eqref{power_error_dynamics} has one zero eigenvalue and rest positive eigenvalues for an undirected and connected graph $\mathcal{G}$. To proceed further, we filter out the dynamics associated with the simple zero eigenvalue by using the transformation:	
\begin{equation}\label{T}
	e_P = T\begin{bmatrix}
		\bar{e}_P   \\
		\tilde{e}_P 
	\end{bmatrix}, 
	\     
	z = T\begin{bmatrix}
		\bar{z}   \\
		\tilde{z} 
	\end{bmatrix},
	\
	d_P = T\begin{bmatrix}
		\bar{d}_P   \\
		\tilde{d}_P 
	\end{bmatrix},
\end{equation}
where $\bar{e}_P, \bar{z}, \bar{d}_P \in \mathbb{R}$ and $\tilde{e}_P, \tilde{z}, \tilde{d}_P \in \mathbb{R}^{n-1}$ and the transformation $T = [v_1,\ldots, v_n]$ with $v_i$ being the left eigenvector associated with $\lambda_i(L)$, $i = 1, \ldots, n$. Using transformation $T$, $L$ can be diagonalized as:
\begin{equation}\label{V}
	T^{-1} L T = \begin{bmatrix}
		\lambda_1(L)  & 0 & \cdots & 0  \\
		0  & \lambda_2(L) & \cdots & 0  \\
		\vdots  & \vdots & \ddots & \vdots   \\
		0  & 0 & \cdots & \lambda_n(L)
	\end{bmatrix}  
	= 
	\begin{bmatrix}
		\begin{array}{c|c}
			0 & \pmb{0}^T_{n - 1} \\
			\hline
			\pmb{0}_{n-1} & \mathcal{R}
		\end{array}
	\end{bmatrix},
\end{equation}
where $\mathcal{R} \in \mathbb{R}^{(n-1)\times (n-1)}$ is a diagonal matrix comprising non-zero eigenvalues of $L$. Using \eqref{T}, \eqref{power_error_dynamics} can be rewritten as:
\begin{subequations}\label{transformed_power_error_dynamics}
	\begin{align}    
		\begin{bmatrix}
			\dot{\bar{e}}_P   \\
			\dot{\tilde{e}}_P 
		\end{bmatrix} &= 
		-T^{-1} L T\begin{bmatrix}
			\bar{e}_P   \\
			\tilde{e}_P 
		\end{bmatrix}
		+
		\beta T^{-1} L T\begin{bmatrix}
			\bar{z}   \\
			\tilde{z} 
		\end{bmatrix}
		+
		T^{-1} L T\begin{bmatrix}
			\bar{d}_P   \\
			\tilde{d}_P 
		\end{bmatrix},\\
		\begin{bmatrix}
			\dot{\bar{z}}   \\
			\dot{\tilde{z}}
		\end{bmatrix} &= 
		- \beta T^{-1} L T\begin{bmatrix}
			\bar{e}_P   \\
			\tilde{e}_P 
		\end{bmatrix}
		-
		T^{-1} L T\begin{bmatrix}
			\bar{z}   \\
			\tilde{z} 
		\end{bmatrix}.
	\end{align}
\end{subequations}
Now, using \eqref{V}, since $\dot{\bar{e}}_P = \dot{\bar{z}} = 0$, \eqref{transformed_power_error_dynamics} further reduces to
\begin{subequations}
	\begin{align}
		\dot{\tilde{e}}_p &= - \mathcal{R} \tilde{e}_P + \beta \mathcal{R} \tilde{z} + \mathcal{R} \tilde{d}_p \\
		\dot{\tilde{z}} &= - \beta \mathcal{R} \tilde{e}_P - \mathcal{R} \tilde{z},
	\end{align}
	which can be expressed in the compact form as:
\end{subequations}
\begin{equation} \label{xi_p_dot_d}
	\dot{\xi}_P = \tilde{\mathcal{K}}\xi_P + \tilde{D}_P,
\end{equation}
where $\xi_P = [{\tilde{e}_P}^T, \tilde{z}^T]^T \in \mathbb{R}^{2(n-1)}$, $\tilde{D}_P = [(\mathcal{R} \tilde{d}_p)^T, \pmb{0}_{n-1}^T]^T \in \mathbb{R}^{2(n-1)}$ and 
\begin{align}\label{K_p}
	\tilde{\mathcal{K}} = 
	\begin{bmatrix}
		-\mathcal{R} & \beta \mathcal{R} \\
		- \beta \mathcal{R} & -\mathcal{R}
	\end{bmatrix}
	= \begin{bmatrix}
		-1 & \beta  \\
		- \beta  & -1
	\end{bmatrix}
	\otimes \mathcal{R} 
	= \tilde{\Phi} \otimes \mathcal{R}. 
\end{align}
Since $\lambda(\tilde{\Phi}) = -1 \pm j_c \beta$ and $\mathcal{R}$ is a positive definite matrix by construction, $\tilde{\mathcal{K}}$ is Hurwitz and has $2(n-1)$ strictly negative eigenvalues in accordance with Lemma~\ref{lem_K_Hurwitz}. We are now ready to state the following convergence result, solving Problem {\bf (P2)}: 

\begin{thm}[Power-sharing control in presence of attack]\label{thm_2}
Consider the system \eqref{power_error_dynamics} where the power attack signal $d_{P} \neq \pmb{0}_n$ satisfies Assumption~\ref{assumption}. Then, for a sufficiently large value of $\beta$, the vector $m_P P$ remains in a small neighborhood of its nominal value while sharing proportional active power, i.e., $\|m_P P - \Delta_P \pmb{1}_n \| \leq \epsilon_P$ for some small $\epsilon_P$.  
\end{thm}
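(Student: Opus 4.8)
The plan is to mirror the argument used for Theorem~\ref{thm_1}, but applied to the reduced system \eqref{xi_p_dot_d}, which is legitimate precisely because $\tilde{\mathcal{K}} = \tilde{\Phi}\otimes\mathcal{R}$ in \eqref{K_p} is Hurwitz (as noted after Lemma~\ref{lem_K_Hurwitz}). First I would write the solution of \eqref{xi_p_dot_d} by variation of constants as $\xi_P(t) = {\rm e}^{\tilde{\mathcal{K}}t}\xi_P(0) + \int_0^t {\rm e}^{\tilde{\mathcal{K}}(t-\tau)}\tilde{D}_P\,d\tau$, take the $\mathcal{L}_2$ norm, and apply the triangle inequality. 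The homogeneous term vanishes as $t\to\infty$ since $\tilde{\mathcal{K}}$ is Hurwitz, leaving only the forced response. The driving signal $\tilde{D}_P = [(\mathcal{R}\tilde{d}_P)^T,\pmb{0}_{n-1}^T]^T$ is uniformly bounded: because $T$ in \eqref{T} is orthogonal (its columns are the orthonormal eigenvectors of the symmetric $L$), we have $\|\tilde{d}_P\|\le\|d_P\|\le\bar{D}_P$ by Assumption~\ref{assumption}. Invoking Lemma~\ref{lem_2} then replaces $\tilde{D}_P$ by a constant vector $\hat{D}_P = [(\mathcal{R}\hat{d}_P)^T,\pmb{0}_{n-1}^T]^T$ with $\|\hat{d}_P\|\le\bar{D}_P$, yielding $\lim_{t\to\infty}\|\xi_P(t)\|\le\|\tilde{\mathcal{K}}^{-1}\hat{D}_P\|$.

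Next I would compute $\tilde{\mathcal{K}}^{-1}$ using the same $2\times 2$ block-inverse formula employed in \eqref{K_inverse}. With $\tilde{\mathcal{K}}_{11}=\tilde{\mathcal{K}}_{22}=-\mathcal{R}$ and $\tilde{\mathcal{K}}_{12}=-\tilde{\mathcal{K}}_{21}=\beta\mathcal{R}$, the relevant blocks simplify to $-\tfrac{1}{1+\beta^2}\mathcal{R}^{-1}$ and $\tfrac{\beta}{1+\beta^2}\mathcal{R}^{-1}$. The key simplification, and the feature that distinguishes this case from Theorem~\ref{thm_1}, is that $\mathcal{R}^{-1}$ here multiplies $\mathcal{R}\hat{d}_P$ and cancels, because the closed-loop matrix and the disturbance channel are built from the \emph{same} Laplacian spectrum. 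Consequently $\tilde{\mathcal{K}}^{-1}\hat{D}_P = [-\tfrac{1}{1+\beta^2}\hat{d}_P^T,\ \tfrac{\beta}{1+\beta^2}\hat{d}_P^T]^T$, whose norm is $\tfrac{1}{\sqrt{1+\beta^2}}\|\hat{d}_P\|$, giving $\lim_{t\to\infty}\|\xi_P(t)\|\le\bar{D}_P/\sqrt{1+\beta^2}$.

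The remaining and genuinely new step is to translate this bound on the reduced state $\xi_P=[\tilde{e}_P^T,\tilde{z}^T]^T$ back into a bound on the full error $e_P$. Because $L$ is singular, the component of $e_P$ along the null direction $\pmb{1}_n$ is not governed by a Hurwitz mode, so I must argue separately that it stays at zero. From \eqref{V} the dynamics give $\dot{\bar{e}}_P=0$, so $\bar{e}_P$ is constant; evaluating it at $t=0$ from \eqref{power_error} and the definition $\Delta_P=(1/n)\pmb{1}_n^T m_P P(0)$ shows $\bar{e}_P(0)=\tfrac{1}{\sqrt{n}}(\pmb{1}_n^T m_P P(0)-n\Delta_P)=0$. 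Equivalently, since the attack enters only through $L d_P$ with $\pmb{1}_n^T L=\pmb{0}_n^T$, the average power $\pmb{1}_n^T m_P P$ is conserved \emph{even under attack}, so the zero mode never leaves the origin. Hence, using orthogonality of $T$, $\|e_P\|=\|\tilde{e}_P\|\le\|\xi_P\|$, and therefore $\lim_{t\to\infty}\|m_P P-\Delta_P\pmb{1}_n\|\le\epsilon_P$ with $\epsilon_P=\bar{D}_P/\sqrt{1+\beta^2}$, which is small for sufficiently large $\beta$.

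I expect the main obstacle to be exactly this last point rather than the norm estimate: the singularity of $L$ means one cannot invert the closed-loop map on the whole space, and the proof succeeds only because the uncontrolled zero mode coincides with a conserved quantity that the orthogonal injection $L d_P$ cannot perturb. Verifying this invariance, and confirming it holds with $\Delta_P$ fixed by the initial data, is the crux that makes bounding the reduced coordinate $\xi_P$ sufficient to guarantee proportional power-sharing.
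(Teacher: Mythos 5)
Your proof is correct, and its skeleton (variation of constants, discarding the homogeneous term since $\tilde{\mathcal{K}}$ in \eqref{K_p} is Hurwitz, invoking Lemma~\ref{lem_2} to replace $\tilde{D}_P$ by a constant $\hat{D}_P$, then bounding $\|\tilde{\mathcal{K}}^{-1}\hat{D}_P\|$ via the block-inverse formula) is exactly the paper's. You depart from the paper in two substantive ways, both to your credit. First, where the paper bounds $\|\tilde{\mathcal{K}}^{-1}\hat{D}_P\|$ submultiplicatively --- reusing Lemma~\ref{lem_H_beta} with $A$ replaced by $\mathcal{R}$ and paying the factor $\lambda_{\max}(\mathcal{R})$ from $\|\mathcal{R}\hat{d}_P\|$, which yields $\epsilon_P = \lambda_{\max}(L)\bar{D}_P/(\lambda_2(L)\sqrt{1+\beta^2})$ in \eqref{e_p_norm} --- you exploit the exact cancellation $\mathcal{R}^{-1}(\mathcal{R}\hat{d}_P)=\hat{d}_P$, which is available here (unlike in Theorem~\ref{thm_1}, where $A^{-1}L \neq -I$ because of the pinning matrix $G$) precisely because the closed-loop matrix and the disturbance channel share the same spectrum. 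Your resulting bound $\epsilon_P = \bar{D}_P/\sqrt{1+\beta^2}$ is sharper than the paper's by the condition-number factor $\lambda_{\max}(L)/\lambda_2(L)\geq 1$, and it shows the steady-state power error is in fact independent of network connectivity, a qualitative conclusion the paper's looser estimate obscures. Second, the paper silently upgrades $\|\tilde{e}_P\|\leq\epsilon_P$ to $\|e_P\|\leq\epsilon_P$; you supply the missing justification, namely that the zero mode $\bar{e}_P$ is both invariant (since $\pmb{1}_n^T L = \pmb{0}_n^T$ annihilates the injection $Ld_P$ as well as the coupling terms) and initially zero (by the definition of $\Delta_P$ as the initial average of $m_P P$), so that, with $T$ in \eqref{T} orthogonal, $\|e_P\| = \|\tilde{e}_P\|$. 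This invariance argument is indeed the crux of why analyzing the reduced system \eqref{xi_p_dot_d} suffices, and making it explicit closes a genuine gap in the paper's own exposition.
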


\begin{proof}
The solution of \eqref{xi_p_dot_d} is given by $\xi_P (t) = {\rm e}^{\tilde{\mathcal{K}} t} \xi_P (0) + \int_{0}^{t} {\rm e}^{\tilde{\mathcal{K}}(t- \tau)}\tilde{D}_p d\tau$. Upon taking norm on both sides and following similar steps as in the proof of Theorem~\ref{thm_1}, one can obtain $\lim_{t \to \infty} \|\xi_P(t) \|\leq \| \tilde{\mathcal{K}}^{-1} \hat{D}_P \|$, where $\hat{D}_P = [(\mathcal{R}\hat{d}_P)^T, \pmb{0}_{n-1}^T]^T \in \mathbb{R}^{2(n-1)}$ for some time-independent vector $\hat{d}_P \in \mathbb{R}^{n-1}$ satisfying $\|\hat{d}_P\| \leq \bar{D}_P$ (see Assumption~\ref{assumption} for $\bar{D}_P$) such that $\|\int_{0}^{t} {\rm e}^{\tilde{\mathcal{K}}(t- \tau)}\tilde{D}_p(\tau) d\tau \| \leq \|\int_{0}^{t} {\rm e}^{\tilde{\mathcal{K}}(t- \tau)}\hat{D}_P d\tau \|$, according to Lemma~\ref{lem_2} from Subsection~\ref{Preliminaries}. Now, similar to \eqref{e_w_norm}, it can be concluded that  
\begin{equation} 
 \lim_{t \to \infty} \left\| \begin{bmatrix}
    	\tilde{e}_P (t) \\
    	\tilde{z} (t)
    \end{bmatrix} \right\| \leq \lambda_{\max}(\mathcal{R})\left\| \begin{bmatrix}
         (\mathcal{R} + {\beta}^{2}\mathcal{R})^{-1} \\
         \beta (\mathcal{R} + {\beta}^2 \mathcal{R})^{-1} 
     \end{bmatrix} \right\| \bar{D}_P.
\end{equation}
Now, exploiting Lemma~\ref{lem_H_beta} by replacing $A$ by $\mathcal{R}$, it can be obtained that
\begin{equation} \label{e_p_norm}
	\lim_{t \to \infty} \left\| \begin{bmatrix}
		\tilde{e}_P (t) \\
		\tilde{z} (t)
	\end{bmatrix} \right\| \leq \frac{\lambda_{\max}(\mathcal{R})\bar{D}_P}{\lambda_{\min}(\mathcal{R}) \sqrt{1 + \beta^2}} = \frac{\lambda_{\max}(L)\bar{D}_P}{\lambda_2(L) \sqrt{1 + \beta^2}},
\end{equation}
since $\lambda_{\max}(\mathcal{R}) = \lambda_{\max}(L)$ and $\lambda_{\min}(\mathcal{R}) = \lambda_2(L)$ by construction, where $\lambda_2(L)$ is the Fielder eigenvalue of the Laplacian $L$. From \eqref{e_p_norm}, it can be concluded that $\|\tilde{e}_{P}(t)\| \leq \epsilon_{P}$ (and hence, $\|{e}_{P}(t)\| \leq \epsilon_{P}$) as $t \to \infty$, where $\epsilon_P = \frac{\lambda_{\max}(L)\bar{D}_P}{\lambda_2(L) \sqrt{1 + \beta^2}}$. Furthermore, for sufficiently large values of gain $\beta$, $\epsilon_P$ assumes small value, and hence, $m_P P$ converges to a small neighborhood around $\Delta_P\pmb{1}_n$ as $t \to \infty$, using \eqref{power_error}. 
\end{proof}

\section{Attack Detection: Exploring Interaction Between Control and Auxiliary Layers}\label{Attack_detection}
Addressing Problem {\bf (P3)}, this section proposes an attack detection method relying on the interaction between the control layer $\Sigma$ and auxiliary layer $\Pi$ in Fig.~\ref{control_digram}. For brevity and clarity, we provide a discussion for the frequency attacks $d_{\omega}$ affecting dynamics $S_{\omega}$ in \eqref{coupled_model}. However, it is equally applicable to the power attacks $d_P$ appearing in the dynamics $S_P$. 

\begin{figure}[h]
	\centering
	\includegraphics[scale=0.15]{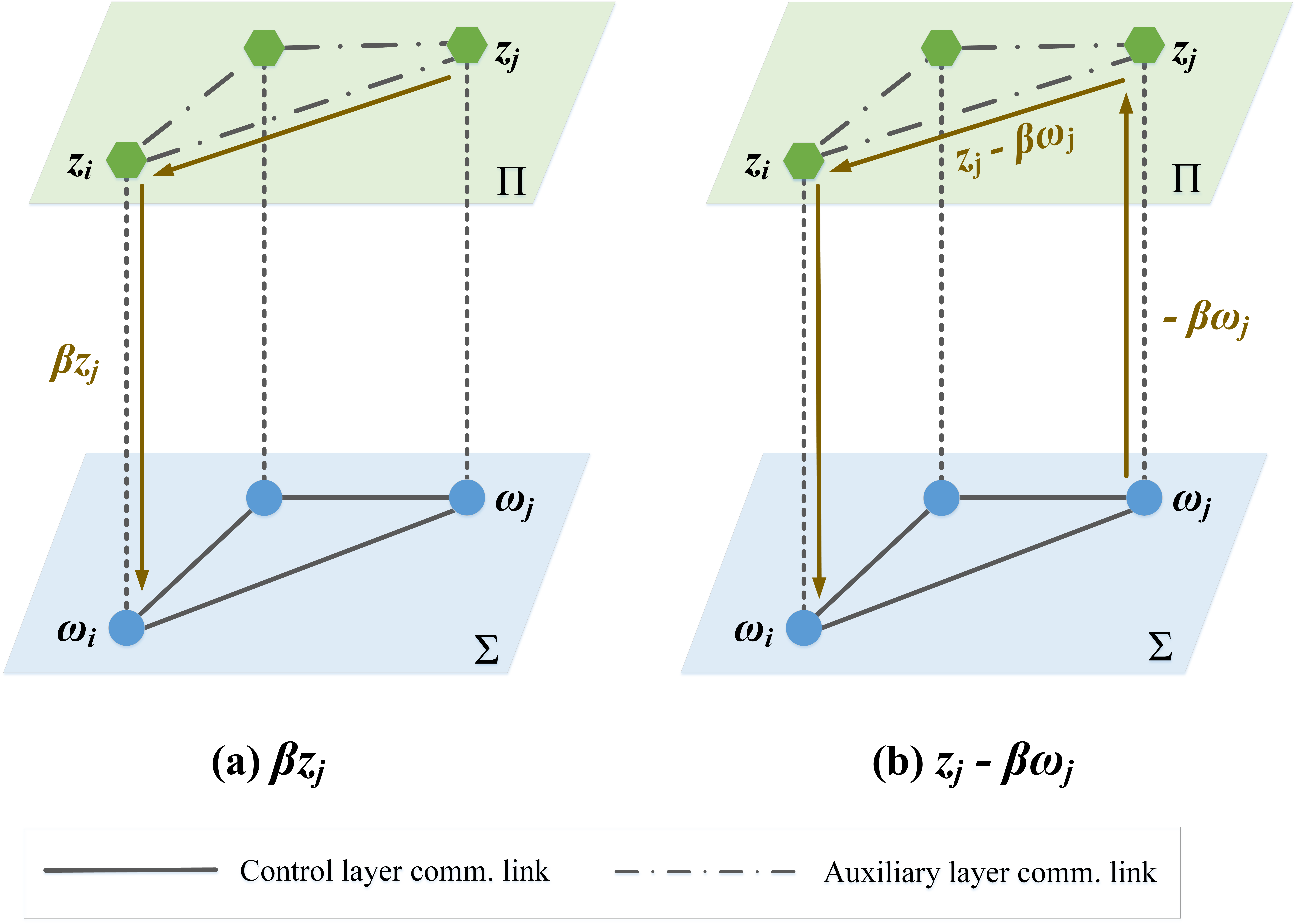}
	\caption{Information flow due to interaction network ($\beta A$) between control ($\Sigma$) and auxiliary ($\Pi$) layers.}
	\label{ij_bar}
	\vspace*{-10 pt}
\end{figure}

Since the layer $\Sigma$ is vulnerable to attacks injected by the attacker as compared to the secured and hidden layer $\Pi$, one can exploit the layer $\Pi$ to check the authenticity of the information shared between two nodes in the layer $\Sigma$. As shown in Fig.~\ref{control_digram}, the layer $\Pi$ contains a set of (virtual) nodes, directly associated with individual (actual) nodes in the layer $\Sigma$. Consequently, the $i^{\text{th}}$ node in $\Sigma$ has state information $z_i$ of the associated node in $\Pi$ and similarly the $i^{\text{th}}$ node in $\Pi$ has information $\omega_i$ of the associated node in $\Sigma$. Further, both the layers are connected to each other via interconnection matrices $\beta A$. With respect to the $i^{\text{th}}$ node, \eqref{u_w} can be rewritten, by segregating terms on the basis of the contributions by the $i^{\text{th}}$ node itself and the neighboring nodes in both $\Sigma$ and $\Pi$ layers, as follows \cite{gusrialdi2022cooperative}: 
\begin{subequations}\label{attack_detection}
\begin{align}
\label{i_1} \dot{\omega}_i  &= A_{[i]} \omega  - \beta (|\mathcal{N}_i | + g_i ) z_i + \sum_{j \in \mathcal{N}_i} \beta z_j + g_i {\omega}^{*} + L_{[i]} d_{\omega}\\
\nonumber \dot{z}_i &= - (| \mathcal{N}_i | + g_i) z_i + \beta (| \mathcal{N}_i | + g_i ) \omega_{i} + \sum_{j \in  \mathcal{N}_i} (z_j - \beta {\omega}_j) \\
\label{i_2} & \qquad - \beta g_i {\omega}^{*},
\end{align}
\end{subequations}
where $A_{[i]}, L_{[i]}$ denote the $i^{\text{th}}$ row of matrices $A$ and $L$, respectively. Further, $g_i = 1$ if the node $i$ is the leader, else $g_i = 0$. The following observations are straightforward toward practical implementation of \eqref{attack_detection}: 
\begin{itemize}
	\item In \eqref{i_1}, the $i^{\text{th}}$ node in $\Sigma$ has access to the information $\beta z_{j}$ of the neighboring nodes in $\Pi$ via its corresponding $i^{\text{th}}$ node in $\Pi$, shared through the interaction network between $\Sigma$ and $\Pi$, as shown in Fig.~\ref{ij_bar}(a). The implementation of remaining terms in \eqref{i_1} is straightforward. 
	\item Analogously, in \eqref{i_2}, the $i^{\text{th}}$ node in $\Pi$ has information $z_j - \beta \omega_j$ from the neighboring nodes in $\Pi$ itself where the information $-\beta \omega_j$ is shared through the interaction between the two layers, as shown Fig. \ref{ij_bar}(b). Again, the implementation of the remaining terms in \eqref{i_2} is straightforward. 
\end{itemize}
From the above discussion, it is clear that the $i^\text{th}$ node in $\Sigma$ has access to the following two pieces of additional information due to the interaction between $\Sigma$ and $\Pi$:
\begin{align}\label{detection}
	\bar{z}_{ij} = \beta z_j, \qquad \bar{\omega}_{ij} = z_j - \beta {\omega}_{j},
\end{align}
where notations $\bar{\omega}_{ij}$ and $\bar{z}_{ij}$ are used to emphasize that these signals contain information about the neighboring nodes $j$ in $\Pi$. According to \eqref{detection}, the $i^\text{th}$ node  estimates the frequency received from the $j^\text{th}$ node in $\Pi$ as:
\begin{equation}\label{test}
	\hat{\omega}_{ij} = \frac{1}{\beta} \left[\frac{\bar{z}_{ij}}{\beta} - \bar{\omega}_{ij} \right],
\end{equation}
which can be compared with the actual frequency signal $\check{\omega}_{ij}$ in \eqref{after_attack} received at the $i^\text{th}$ node to testify whether the communication link $(i, j) \in \mathcal{E}$ in $\Sigma$ is under attack or not. Consequently, if $\hat{\omega}_{ij} \neq \check{\omega}_{ij} $, the corresponding $(i, j)^\text{th}$ link is declared to be under attack. Once the corrupted communication link is identified, it can be isolated (provided the remaining network contains a spanning tree) from the rest of the control layer till the time attack remains prevalent in the network. 

\begin{remark}
It is essential to emphasize that the attack detection approach described above is a byproduct of the interaction between $\Sigma$ and $\Pi$. It can be employed if there is a specific need to isolate any corrupted link, thereby enhancing the accuracy of the steady-state behavior. However, it's crucial to note that the isolation of any link is not mandatory for the proper functioning of the proposed resilient controllers \eqref{u_w} and \eqref{u_P} for some finite $\epsilon_{\omega}$ and $\epsilon_P$. Moreover, the isolation between the layers $\Sigma$ and $\Pi$ can be ensured using contemporary communication network slicing approaches, such as cloud computing management and software-defined networking \cite{Danzi2019, Zhang2019}. These advanced techniques provide effective mechanisms for segregating communication channels and ensuring the secure operation of the system.	
\end{remark}

\begin{figure}[t]
	\centering
	\includegraphics[scale=0.55]{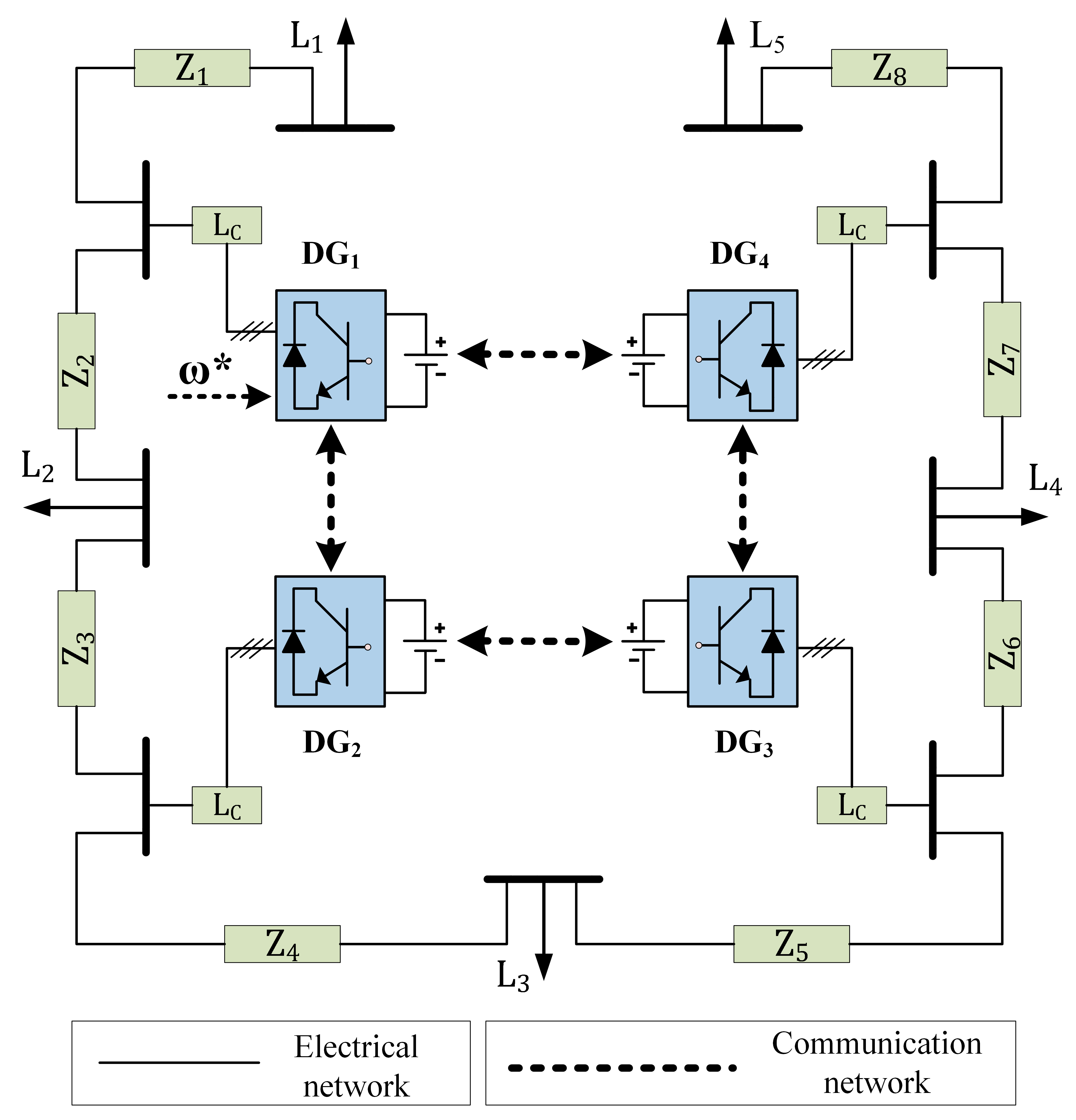}
	\caption{The test islanded AC microgrid under consideration.}
	\label{system}
	\vspace*{-9pt}
\end{figure}

\begin{table}[t]
	\caption{System Parameters}
	\centering
	\begin{tabular}{c c}
		\hline
		Parameter  & Value  \\
		\hline
		Droop gain ($m_{P_1}, m_{P_2}$) & $2 \times 10^{-3}$\\
		
		Droop gain ($m_{P_3}, m_{P_4}$) & $3\times10^{-3}$\\ 
		
		Filter inductance ($L_f$) & $1.35$ mH\\
		
		Filter capacitance ($C_f$) & $50$$\mu$F\\
		
		Coupling inductance ($L_c$) & $0.35$ mH\\
		
		Line impedance ($\rm{Z_1}, \rm{Z_2}$) & $R=1.2\Omega$, $L=$12mH \\
		
		Line impedance ($\rm{Z_3}, \rm{Z_4}$) & $R=1\Omega$, $L=$10mH \\
		
		Line impedance ($\rm{Z_5}, \rm{Z_6}$) & $R=0.8\Omega$, $L=$8mH \\
		
		Line impedance ($\rm{Z_7}, \rm{Z_8}$) & $R=0.5\Omega$, $L=$5mH \\
		
		Loads ($\rm{L_1, L_3, L_5}$) & $P=7$ kW, $Q=4$ kVar\\
		
		Loads ($\rm{L_2, L_4}$) & $P=10$ kW, $Q=5$ kVar\\
		\hline
	\end{tabular}
	\label{table:parameters}
	\vspace*{-10pt}
\end{table}

\section{Simulation Case Studies}\label{Simulation_case_studies}
Consider a 3-phase, 415 V islanded AC microgrid as shown in Fig.~\ref{system}, which comprises $4$ DGs, connected through $8$ transmission lines and $5$ loads, whose values are as listed in Table \ref{table:parameters}. The DGs are connected as per the undirected and connected communication topology, as shown in Fig.~\ref{system}, which has the following Laplacian:
\begin{equation*}
L = \begin{bmatrix}
	2 & -1 & 0 & -1 \\
	-1 & 2 & -1 & 0 \\
	0 & -1 & 2 & -1 \\
	-1 & 0 & -1 & 2
\end{bmatrix}.
\end{equation*}
Assume that DG $1$ acts as the leader and has access to reference frequency $\omega^{*} = 314$~rad/s ($f^{*}=\omega^{*}/2\pi = 50$ Hz), with all the remaining DGs acting as fully connected followers. Therefore, the pinning gain $g_1 = 1$ and $g_i = 0$ for $i = 2, 3, 4$. Associated with each DG, there exist control and auxiliary nodes in the layers $\Sigma$ and $\Pi$, respectively. Consequently, the matrices $A, B, C$ in \eqref{u_w} can be obtained as  
\begin{equation*}
A = \begin{bmatrix}
	-3 & 1 & 0 & 1 \\
	1 & -2 & 1 & 0 \\
	0 & 1 & -2 & 1 \\
	1 & 0 & 1 & -2
\end{bmatrix}, \
B = \begin{bmatrix}
	1\\
	0 \\
	0 \\
	0
\end{bmatrix}, \
C = A\pmb{1}_n = \begin{bmatrix}
	-1\\
	0 \\
	0 \\
	0
\end{bmatrix}.	
\end{equation*}

According to Lemma~\ref{lem_freq_attack_absence} and Lemma~\ref{lem_power_attack_absence}, one can easily verify that, in the absence of any attack, the frequencies of all the DGs converge to $314$ rad/s, while DGs 1, 2 and DGs 3, 4 share equal active power as per their equal droop coefficients. We now analyze protocols \eqref{u_w} and \eqref{u_P} in case of an attack as discussed below.

\begin{figure}[t]
	\centering
	\includegraphics[width=3.3in, height=1.3in]{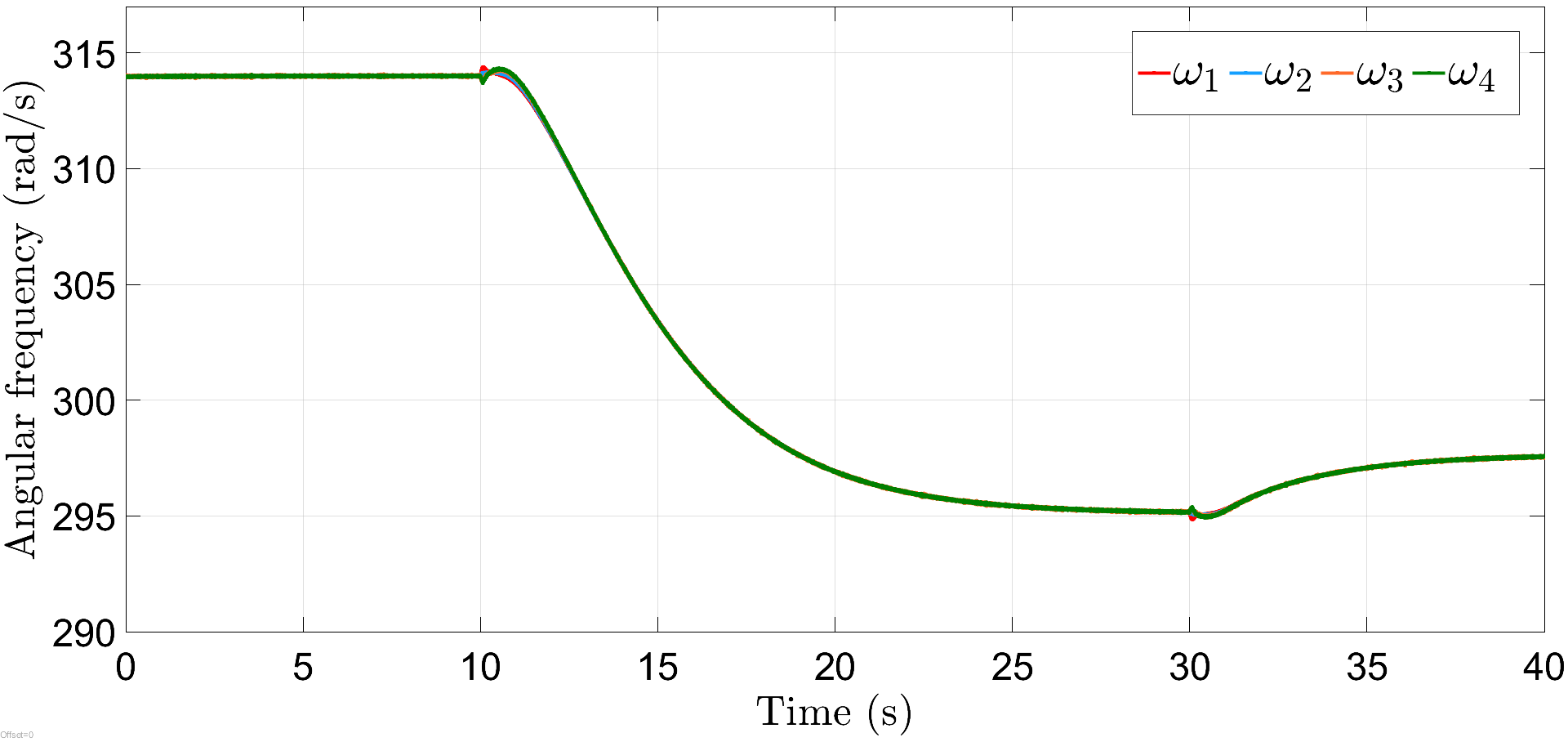}
	\caption{DG frequency under attack and absence of $\Pi$ layer.}
	\label{w0}
	\vspace*{-10pt}
\end{figure}
\begin{figure}[t]
	\centering
	\includegraphics[width=3.3in, height=1.3in]{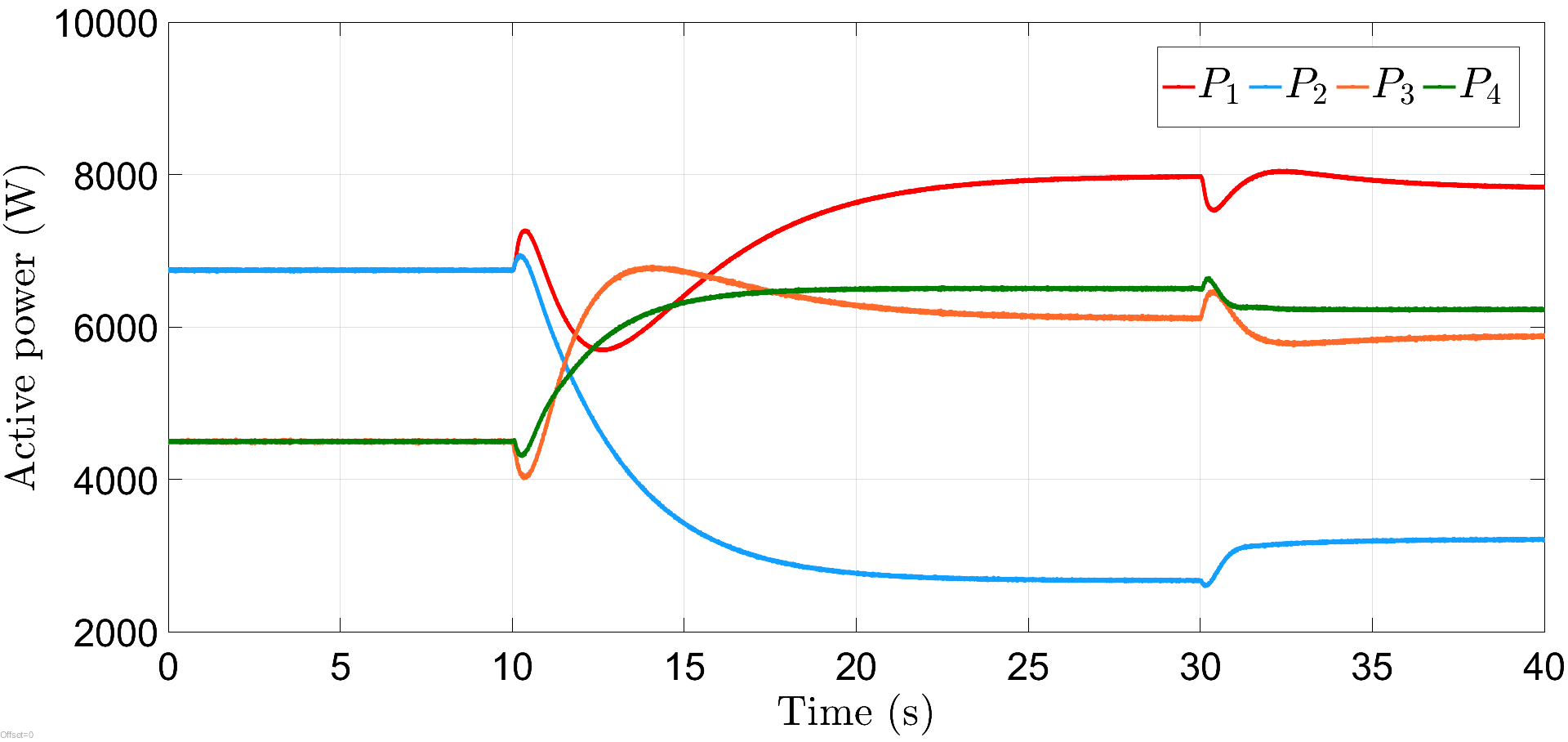}
	\caption{DG active power under attack and absence of $\Pi$ layer.}
	\label{P0}
	\vspace*{-10pt}
\end{figure}

\begin{figure}[t]
	\centering
	\includegraphics[width=3.3in, height=1.3in]{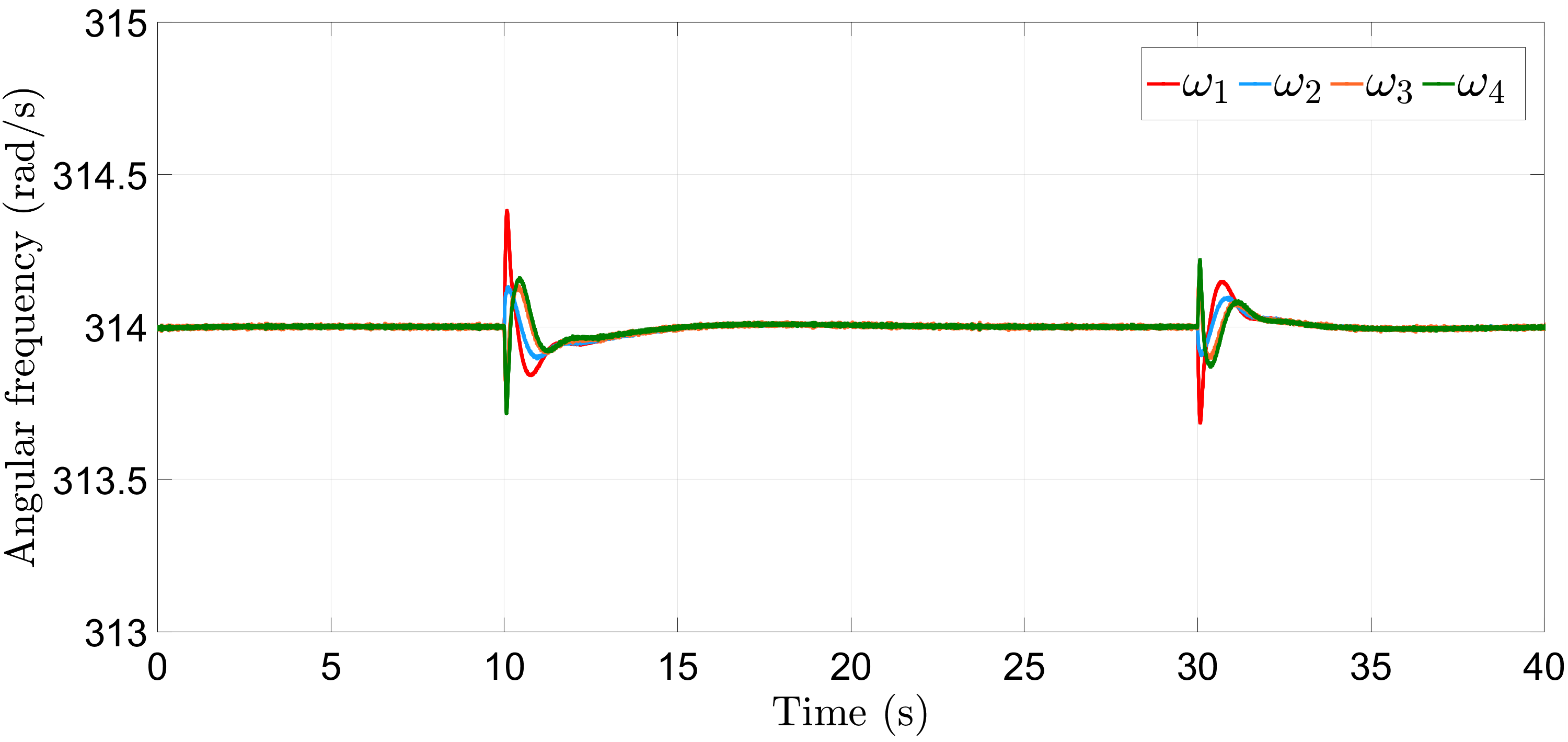}
	\caption{DG frequency under attack and presence of $\Pi$ layer.}
	\label{w1}
	\vspace*{-10pt}
\end{figure}
\begin{figure}[t]
	\centering
	\includegraphics[width=3.3in, height=1.3in]{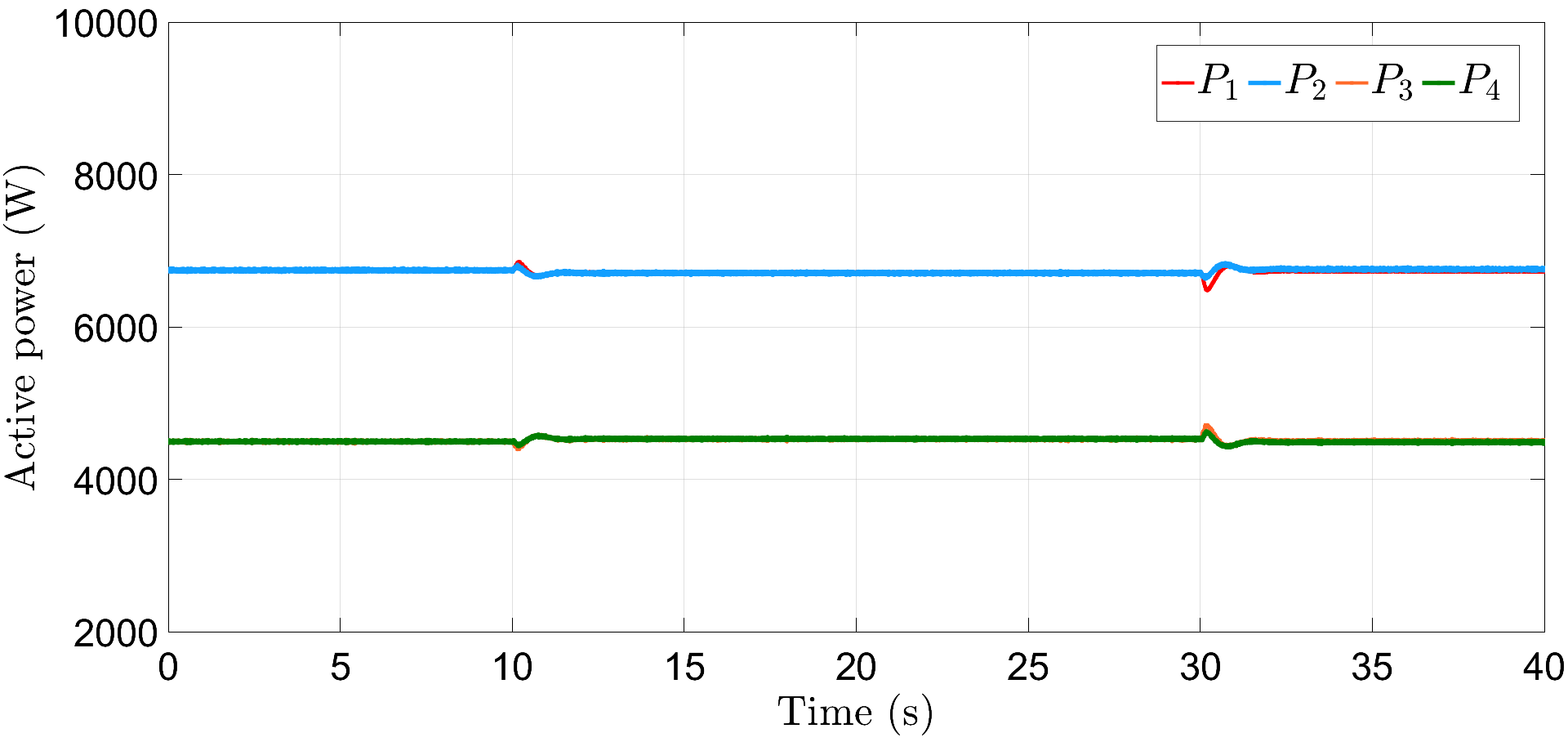}
	\caption{DG active power under attack and presence of $\Pi$ layer.}
	\label{P1}
	\vspace*{-10pt}
\end{figure}

\subsection{Controller performance under attacked condition}\label{controller_performance}
According to Assumption~\ref{assumption}, we consider that the frequency and power attacks are governed by the dynamics: 
\begin{equation} \label{frequency_attack}
\dot{d}_{\omega} = F_{\omega}d_{\omega} + G_{\omega} \omega,
\end{equation}
where 
\begin{equation*}
F_{\omega}= \begin{bmatrix}
	-5 & 0 & 0 & 0\\
	0 & -3 & 0 & 0 \\
	0 & 0 & -5 & 0 \\
	0 & 0 & 0 & -3
\end{bmatrix},
\end{equation*}
\begin{equation*}
G_{\omega} =
 \begin{bmatrix}
	-0.001 & -0.002 & -0.003 & -0.004\\
	-0.003 & -0.001 & -0.004 & -0.002 \\
	0.004 & 0.003 & 0.002 & 0.001\\
	0.002 & 0.004 & 0.001 & 0.003
\end{bmatrix}.
\end{equation*}
and 
\begin{equation}\label{power_attack}
\dot{d}_{P} = F_{p}d_{P} + G_{P} (m_{P}P),
\end{equation}
where
\begin{equation*}
F_{P}= \begin{bmatrix}
	-2.5 & 0 & 0 & 0\\
	0 & -3 & 0 & 0 \\
	0 & 0 & -2.5 & 0 \\
	0 & 0 & 0 & -3
\end{bmatrix},
\end{equation*}
\begin{equation*}
G_{P} = \begin{bmatrix}
	-0.035 & -0.036 & -0.037 & -0.038\\
	-0.088 & -0.085 & -0.086 & -0.087 \\
	0.037 & 0.038 & 0.035 & 0.036 \\
	0.086 & 0.087 & 0.088 & 0.085
\end{bmatrix}.
\end{equation*}
Please note that these frequency and power attacks \eqref{frequency_attack} and \eqref{power_attack} are completely unknown to the controller and are initialized in the system as injections by the attacker at a given time. To emphasize the importance of introducing an auxiliary layer in the distributed control framework, we now discuss the performance of proposed controllers in the absence and presence of $\Pi$, followed by the effect of load perturbations.

\begin{figure}[t]
	\centering
	\includegraphics[width=3.3in, height=1.3in]{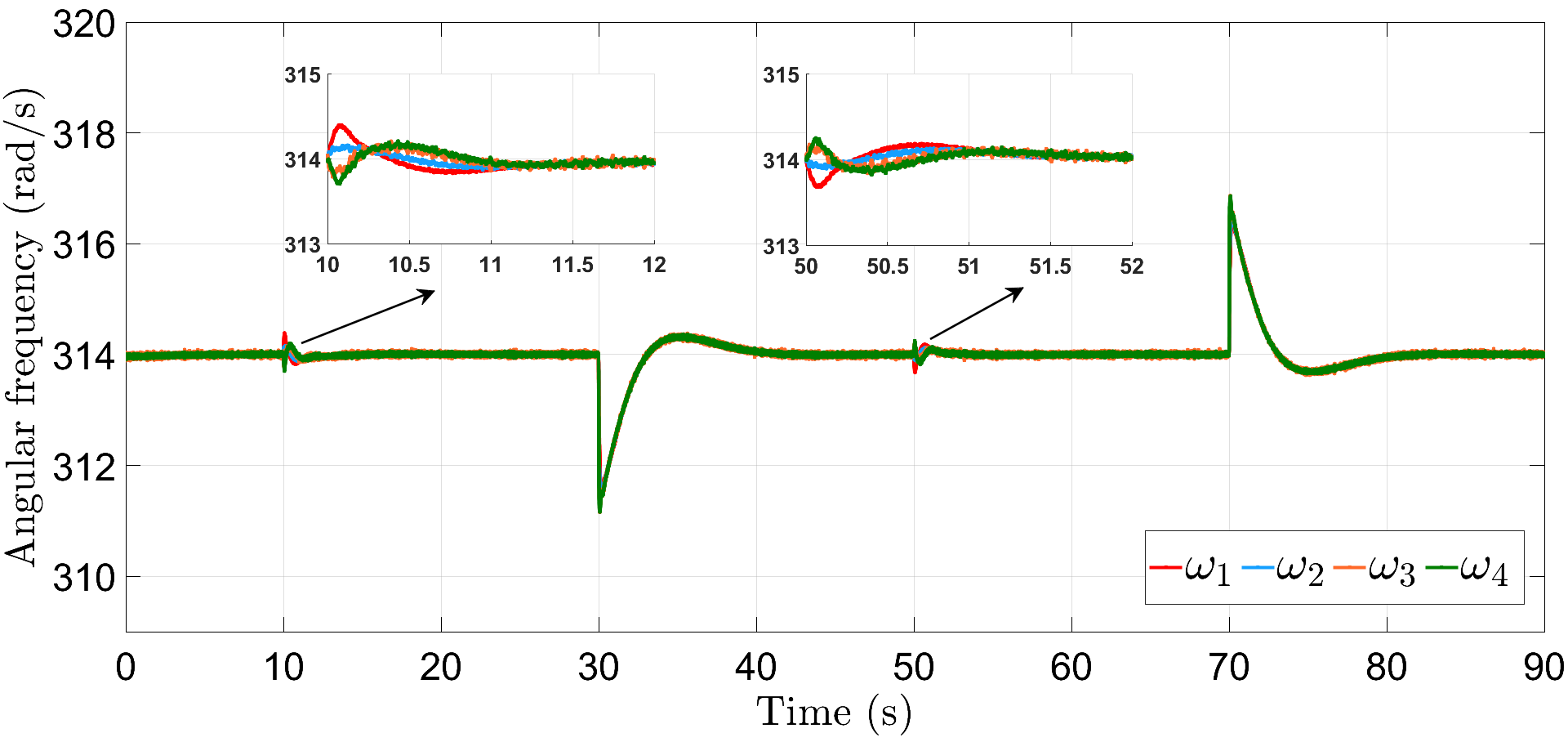}
	\caption{DG frequency under attack and load perturbation.}
	\label{frequency}
	\vspace*{-10pt}
\end{figure}
\begin{figure}[t]
	\centering
	\includegraphics[width=3.3in, height=1.3in]{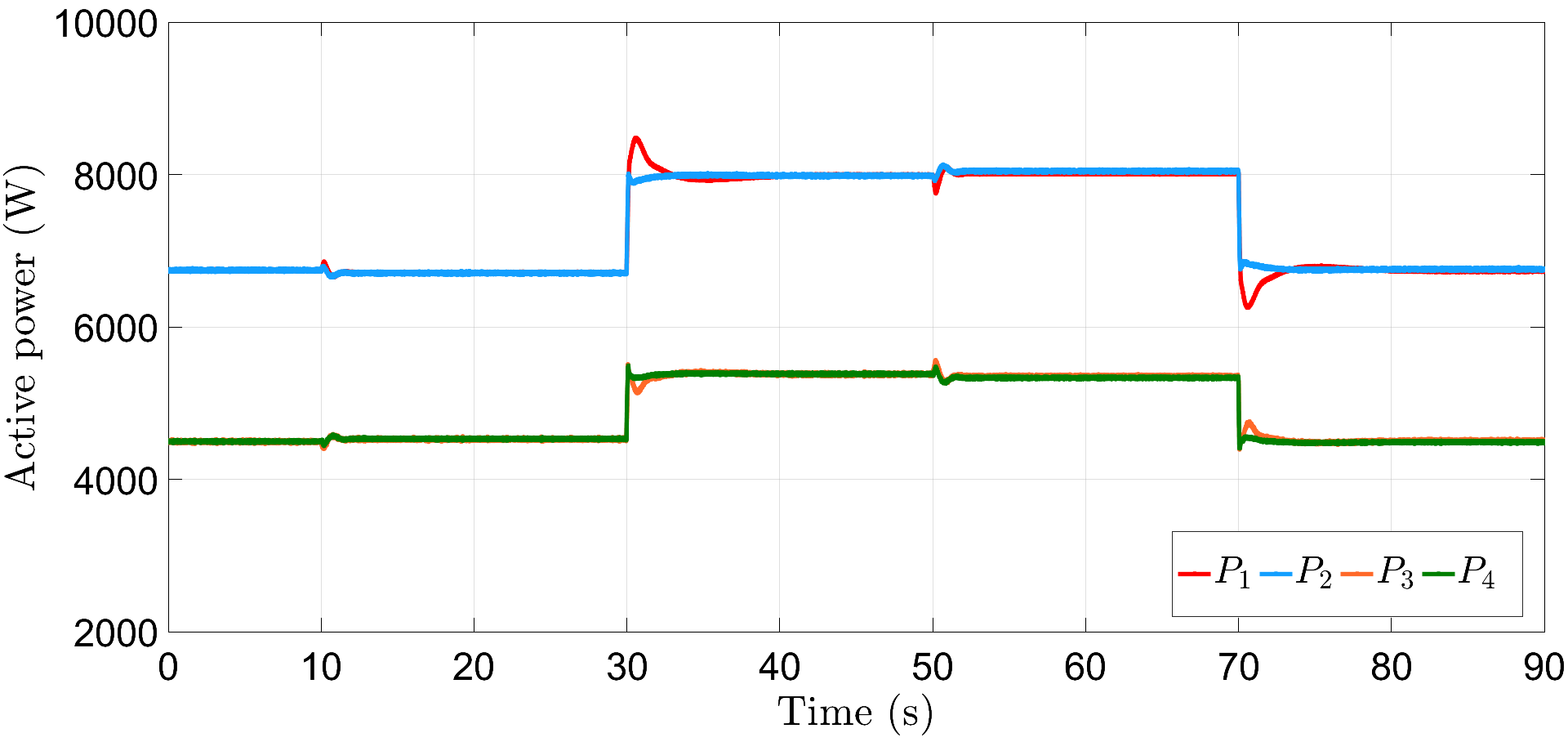}
	\caption{DG active power under attack and load perturbation.}
	\label{power}
	\vspace*{-10pt}
\end{figure}

\begin{figure}[t]
	\centering
	\includegraphics[width=3.3in, height=1.3in]{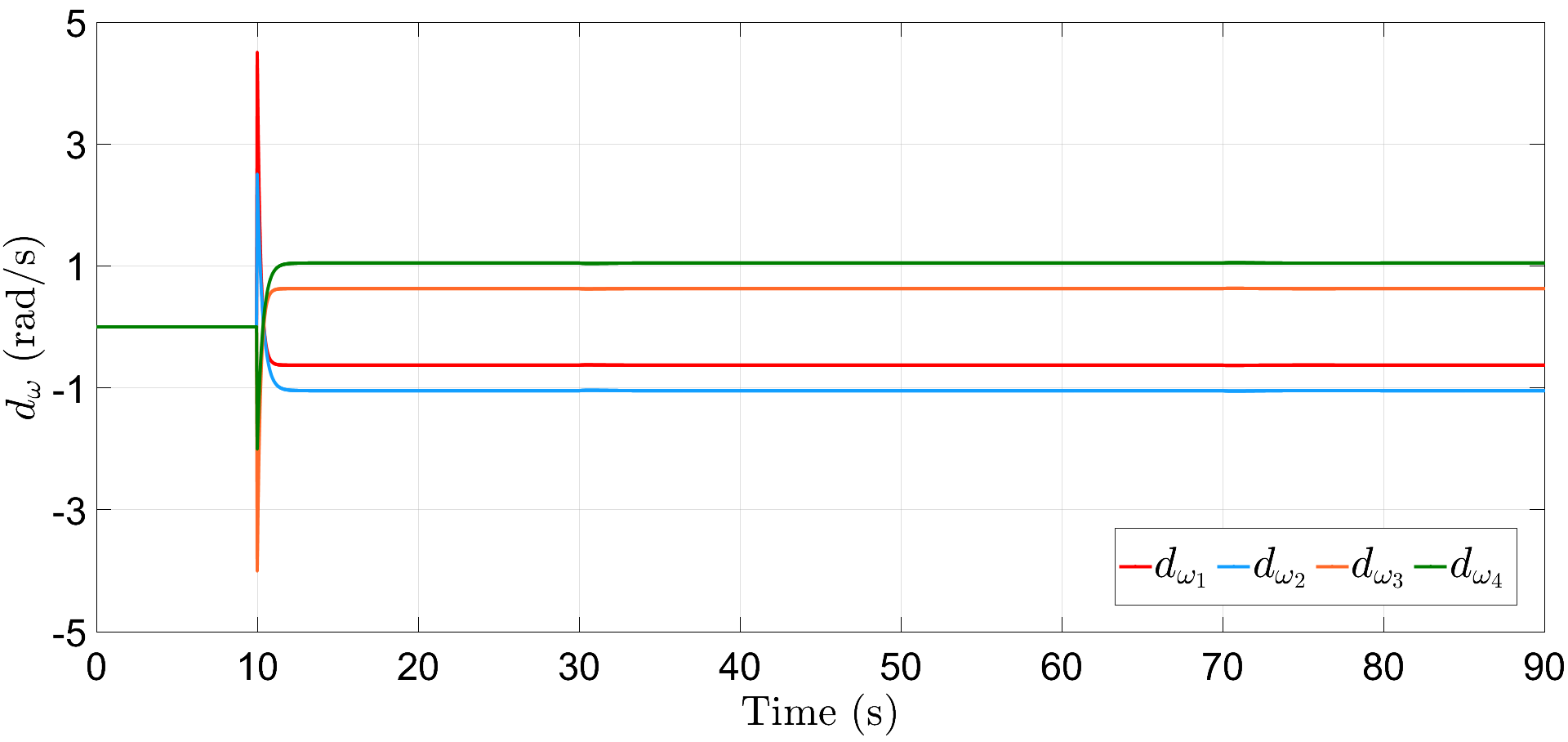}
	\caption{Plot of frequency attack $d_{\omega}$ components with time.}
	\label{frequency_attack_figure}
	\vspace*{-10pt}
\end{figure}

\begin{figure}[t]
	\centering
	\includegraphics[width=3.3in, height=1.3in]{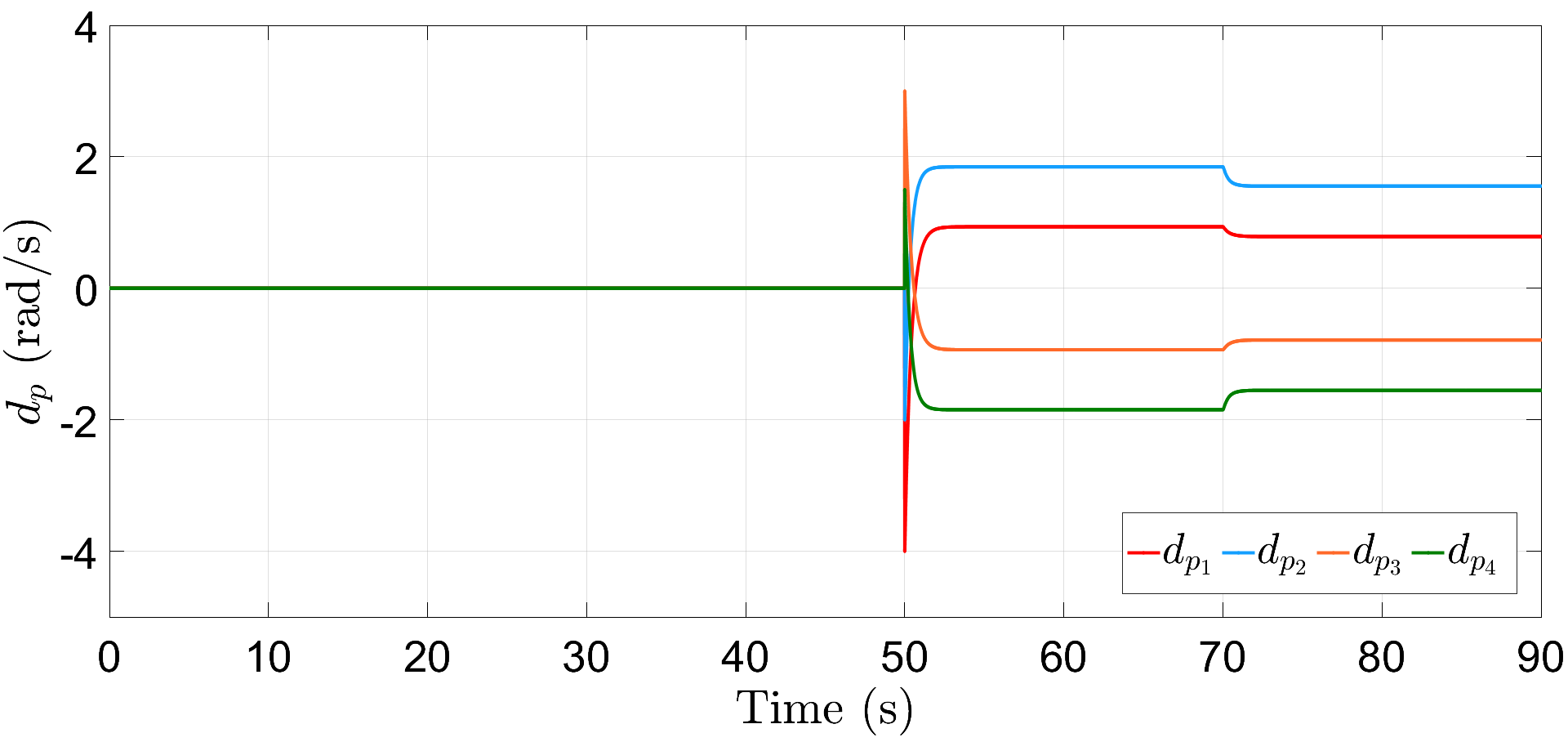}
	\caption{Plot of power attack $d_P$ components with time.}
	\label{power_attack_figure}
	\vspace*{-10pt}
\end{figure}

\subsubsection{Absence of auxiliary layer} We first illustrate the test system in the presence of cyber-attacks and the absence of an auxiliary layer (and hence, the auxiliary nodes) from the frequency and power controllers \eqref{u_w} and \eqref{u_P}, respectively. 

Initially, in the absence of an attack, the microgrid is operating normally under controllers \eqref{u_w_1} and \eqref{u_P_1}, at $314$ rad/s with DGs 1, 2 and 3, 4 delivering equal power at $6.7$ kW and $4.5$ kW, respectively, as shown in Figs.~\ref{w0} and \ref{P0} for the initial 10 s. The frequency and power attacks in \eqref{frequency_attack} and \eqref{power_attack} are initialized at $t=10$ s and $t=30$ s, causing the frequency of all the DGs to deviate by about $19$ rad/s from the nominal value at 10 s and settle down to a common steady-state value of $295$ rad/s due to the distributed action of nodes in the control layer, which is further shifted to $298$ rad/s post introduction of power attack at $t=30$ s. Similarly, the active power-sharing is also adversely affected, as depicted by Fig.~\ref{P0}.

\subsubsection{Presence of auxiliary layer} We now simulate the test microgrid in the presence of auxiliary layer in \eqref{u_w} and \eqref{u_P}, where the initial states $z(0)$ of the auxiliary nodes are randomly chosen and gain $\beta = 2$. As shown in Fig.~\ref{w1}, frequencies are restored in the neighborhood of $314$ rad/s in about $5$ s, post attacks \eqref{frequency_attack} and \eqref{power_attack}, introduced at $t=10$ s and $t=30$ s, respectively. Similarly, Fig.~\ref{P1} shows that power-sharing is also maintained for DGs 1, 2 and 3, 4 in the neighborhood of their nominal values, after a slight deviation at $t=10$ s and $t=30$ s, which lasts approximately about $1.5$ s.


\subsubsection{Effect of load perturbations} 
In addition to the resiliency against frequency and power attacks, we also verify the robustness of proposed controllers \eqref{u_w} and \eqref{u_P} against abrupt load deviations. For illustration, we consider here that the frequency and power attacks are initiated at $t = 10$ s and $t= 50$ s with different (initial) values for each communication link, given by, $d_{\omega}(t = 10~{\rm s}) = [4.5, 2.5, -4, -2]^T$ and $d_P(t = 50~{\rm s})  = [-4, -2.5, 3, 1.5]^T$, respectively. Further, the loads $\rm{L_2, L_4}$ are increased by $0.5$ times their initial magnitude at $t=30$ s and then decreased by the same amount at $t=70$ s, respectively. Fig.~\ref{frequency} depicts that the proposed resilient control scheme restores the DG frequencies to their nominal value within a span of $10$ s. Similarly, Fig.~\ref{power} illustrates accurate power-sharing post load perturbation, with DGs 1, 2 now sharing $8$ kW and DGs 3, 4 sharing $5.4$ kW, respectively. The plots for $d_{\omega}$ and $d_P$ are shown in Figs.~\ref{frequency_attack_figure} and \ref{power_attack_figure}, respectively, where the attacks converge to different steady-state values. Note that the term $G_{\omega}\omega$ in \eqref{frequency_attack} is of the order of $10^{-3}$ and is dominated by the first term $F_{\omega}d_{\omega}$, the plot for $d_{\omega}$ does not show noticeable fluctuations with load perturbations, as compared to $d_P$.    


\subsection{Attack Detection}

\begin{figure}[t!]
	\centering
	\subfigure[During attack]{\includegraphics[scale=0.2]{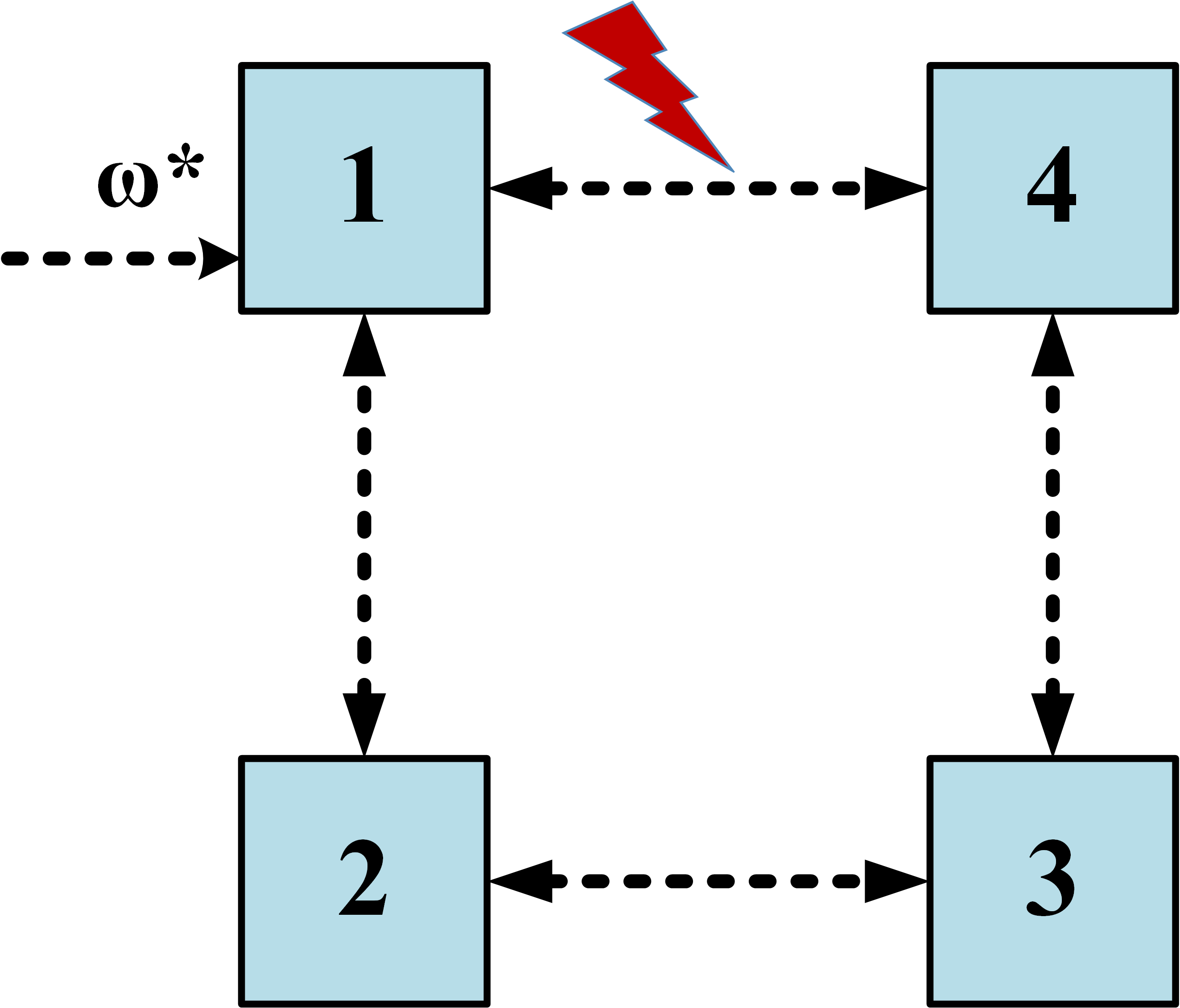}\label{topology_before}} \hspace*{1.0cm}
	\subfigure[After attack]{\includegraphics[scale=0.2]{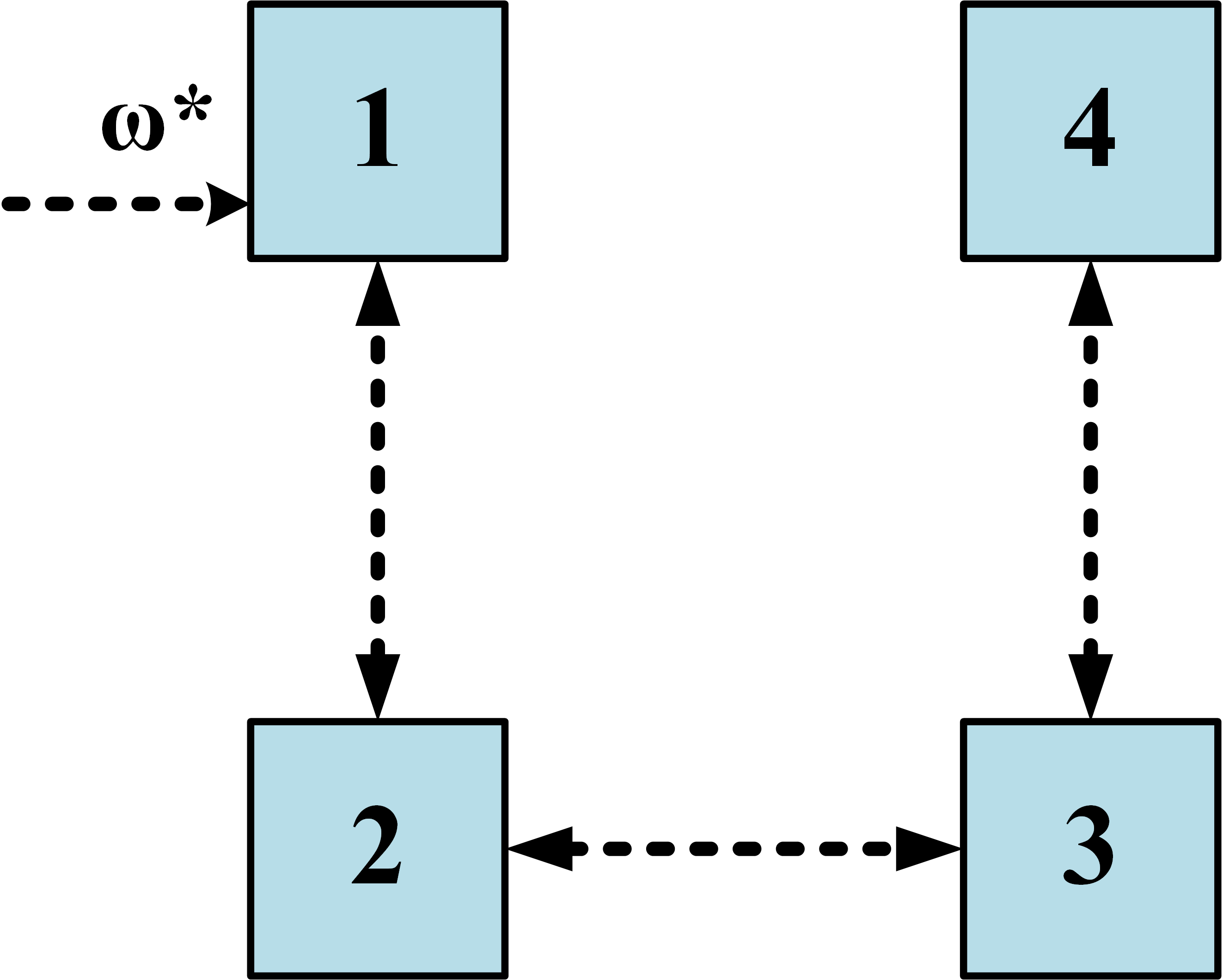}\label{topology_after}}
	\caption{Communication topology before and after attack isolation. Topology (b) also contains a spanning tree.}
	\vspace*{-10pt}
\end{figure}

\begin{figure}[t!]
	\centering
	\includegraphics[width=3.3in, height=1.3in]{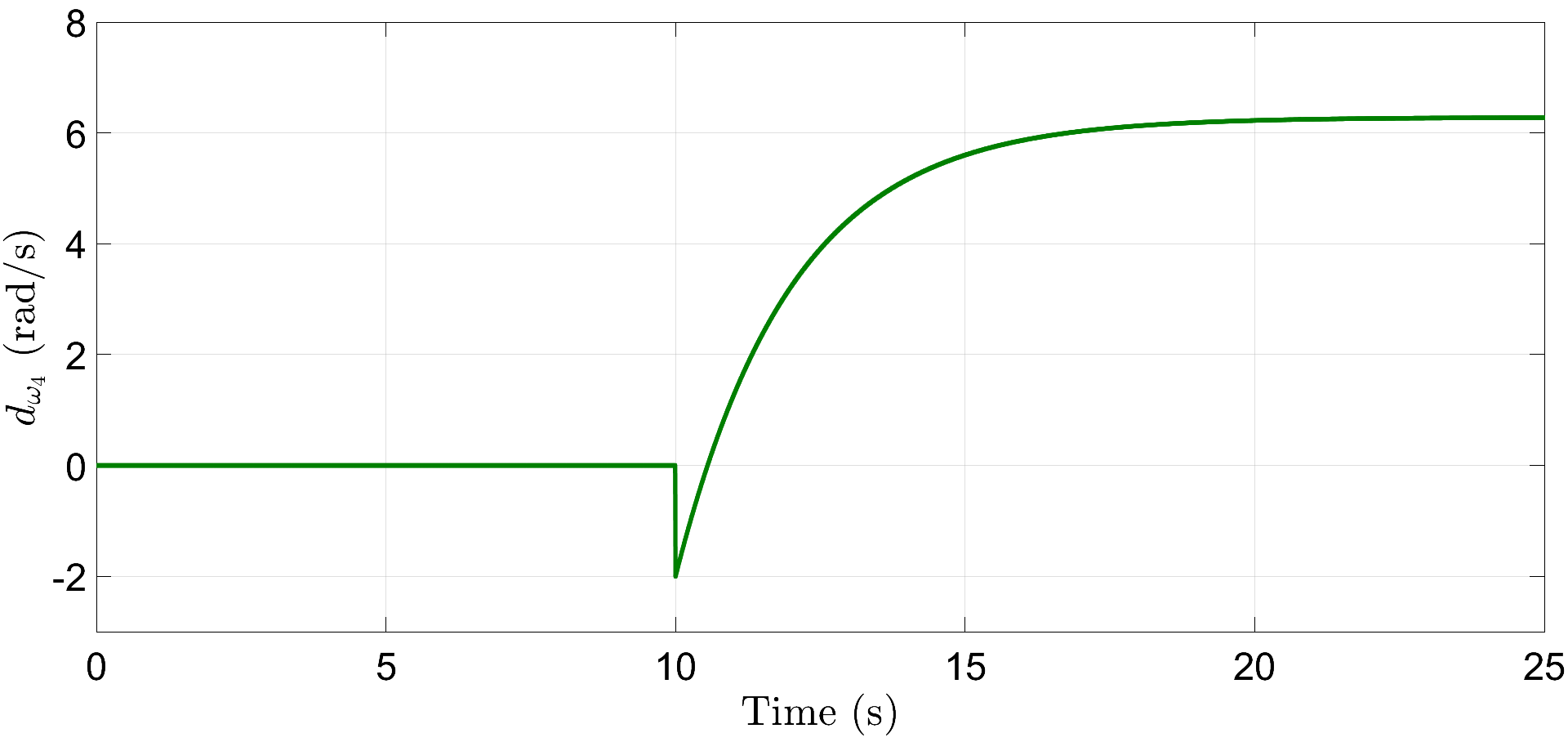}
	\caption{Plot of $d_{\omega_4}$ with time.}
	\label{d_w4}
	\vspace*{-10pt}
\end{figure}
\begin{figure}[t!]
	\centering
	\includegraphics[width=3.3in, height=1.3in]{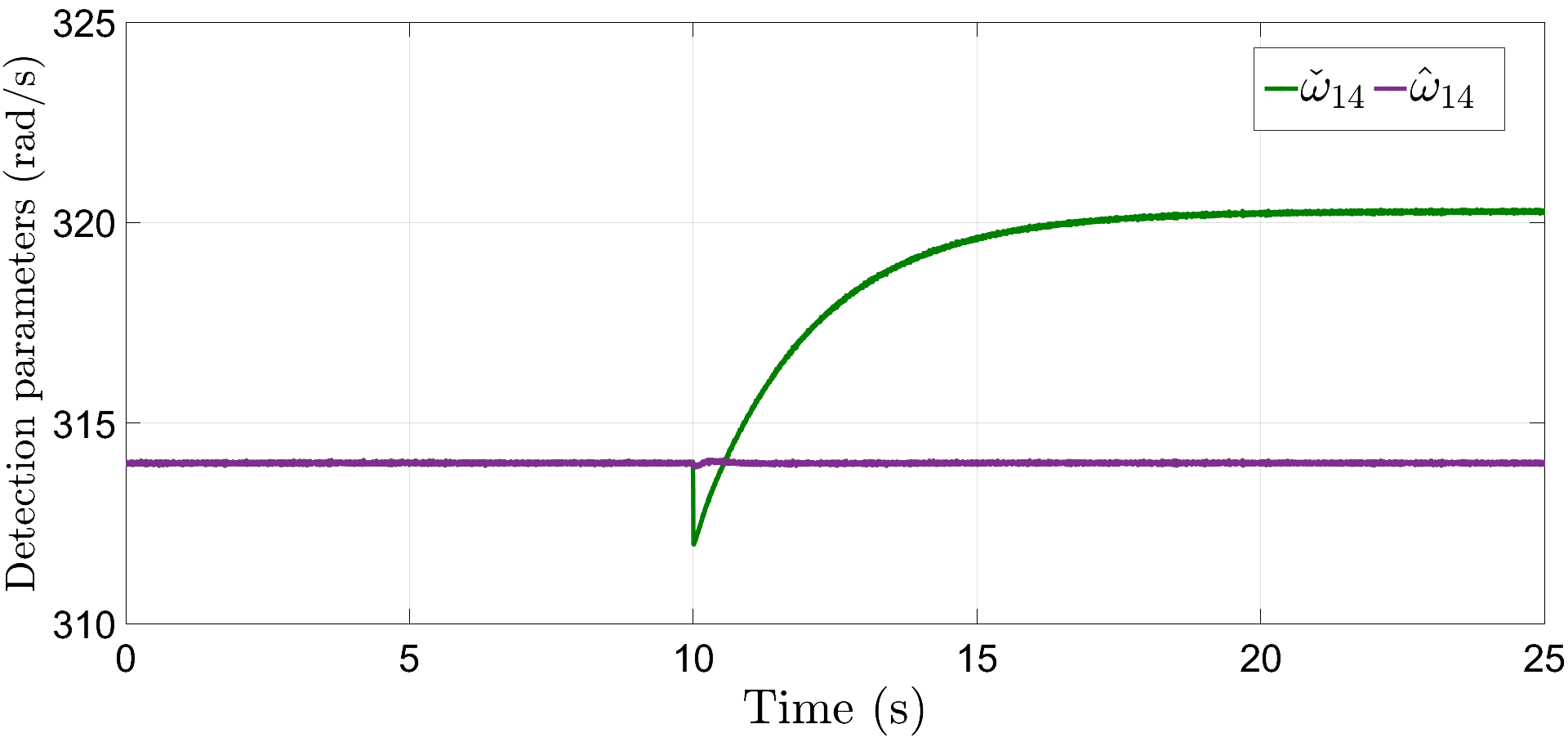}
	\caption{Attack detection at $t = 10$~s as $\check{\omega}_{14} \neq \hat{\omega}_{14}$.}
	\label{detection_figure}
	\vspace*{-10pt}
\end{figure}

\begin{figure}[t!]
	\centering
	\includegraphics[width=3.3in, height=1.3in]{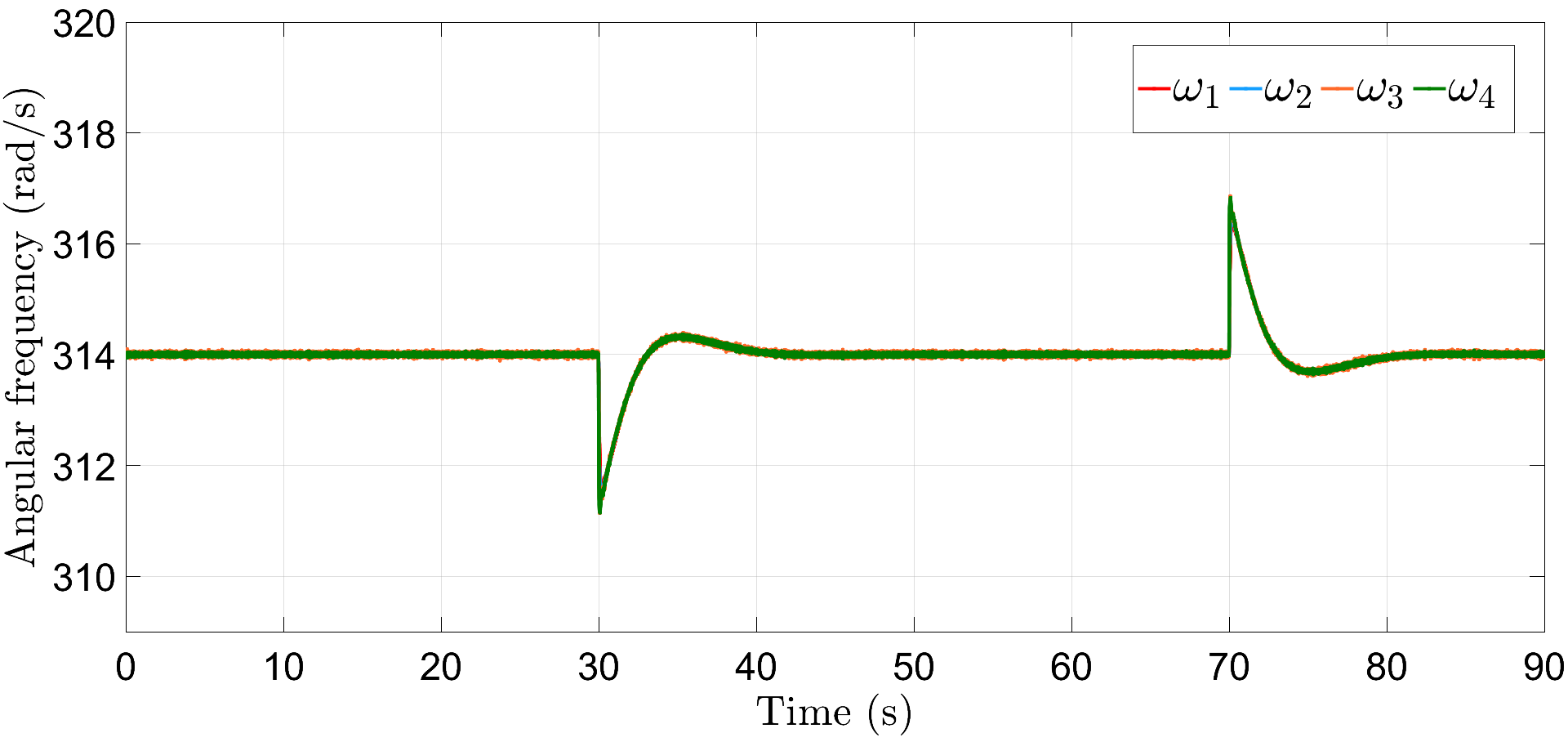}
	\caption{DG frequency with topology in Fig.~\ref{topology_before}.}
	\label{frequency_last}
	\vspace*{-10pt}
\end{figure}
\begin{figure}[t!]
	\centering
	\includegraphics[width=3.3in, height=1.3in]{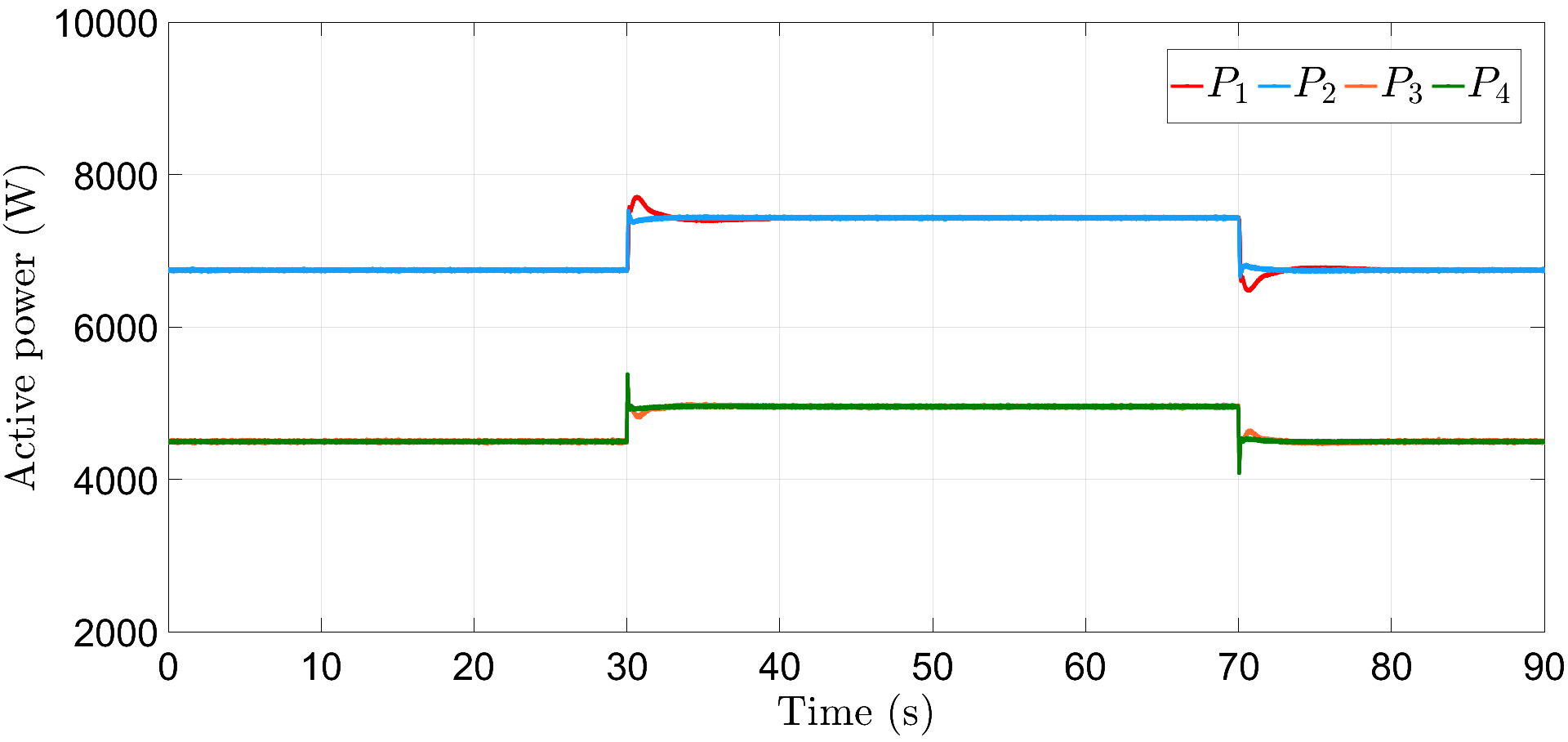}
	\caption{DG active power with topology in Fig.~\ref{topology_after}.}
	\label{power_last}
	\vspace*{-10pt}
\end{figure}

To illustrate this, we consider the following scenario wherein the communication link between DG 1 and DG 4 is under attack $d_{\omega_4} \in \mathbb{R}$ as shown in Fig.~\ref{topology_before}, which is initialized at $t=10$ s with an initial magnitude of $-2$ rad/s as shown in Fig.~\ref{d_w4}, with all the other attack components being 0. Due to this, a corrupted frequency signal $\check{\omega}_{14} = \omega_4 + d_{\omega_4}$ (please refer to \eqref{after_attack}) is sent from node $4$ to node $1$ in the control layer $\Sigma$. However, DGs $1$ and $4$ share an additional information through the auxiliary layer $\Pi$ according to \eqref{detection}, and is given by $\bar{z}_{14} = \beta z_4, \ \bar{\omega}_{14} = z_4 - \beta {\omega}_{4}$, using which, the estimated frequency signal at node 1, due to an attack on node 4, can be calculated using \eqref{test} as $\hat{\omega}_{14} = \frac{1}{\beta} (\frac{\bar{z}_{14}}{\beta} - \bar{\omega}_{14})$.

Fig.~\ref{detection_figure} shows the plot of actual $\check{\omega}_{14}$ and estimated frequency signal $\hat{\omega}_{14}$ with time. Clearly, $\check{\omega}_{14} \neq \hat{\omega}_{14}$ after $t=10$ s, and hence, an attack is detected on the communication link connecting nodes 1 and 4. This attack can be isolated by removing the edge $(1, 4) \in \mathcal{E}$ provided the remaining network is connected, that is, the isolated network contains at least one spanning tree. As shown in Fig.~\ref{topology_after}, since the microgrid system remains connected even after isolation of the corrupted communication link, it operates normally likewise in subsection \ref{controller_performance}; please refer to Figs.~\ref{frequency_last} and \ref{power_last} for frequency restoration and power-sharing in this situation. Here, the load perturbations are introduced with the same magnitude and at similar time instants as in the previous subsection.

\section{Conclusion}\label{Conclusion}
Relying on an auxiliary layer-based design, we investigated an attack-resilient distributed control mechanism for achieving frequency regulation and active power-sharing in an islanded AC microgrid network. One of the main features of our approach is that it not only guarantees the resiliency against simultaneous frequency and power attacks but also, devises an attack detection mechanism as a byproduct, leveraging the interaction between the control and auxiliary layers. The frequency and power controllers were formulated as leader-follower and leaderless multi-agent systems, respectively, accompanied by suitable auxiliary-state dynamics. Under (standard) assumptions on frequency and power attacks, it was shown that the proposed approach assures the frequency restoration and active power-sharing in the small neighborhood of their steady-state values in the presence of any attack. It was proved that these bounds depend on the underlying network topology, the magnitude of the attack signal, and the pinning and control gains decided by the designer. Extensive simulation results were provided to illustrate and verify the theoretical developments in the paper, followed by a simulation demonstrating attack detection and isolation provided the resulting network is connected. 

It would be interesting in future work to extend the analysis to a resilient finite-time distributed framework as the power systems are expected to respond in a certain time to avoid any serious damage.       


\bibliographystyle{IEEEtran}
\bibliography{References}

\end{document}